\pgfplotsset{compat=1.15}
\theoremstyle{plain}
\newtheorem{theorem}{Theorem}
\newtheorem{lemma}[theorem]{Lemma}
\theoremstyle{definition}
\newtheorem{definition}{Definition}
\newcommand{\bb}[1]{\ensuremath{\mathbb{#1}}}
\newcommand{\dr}{\ensuremath{\rangle\!\rangle}}
\newcommand{\dl}{\ensuremath{\langle\!\langle}}
\newcommand{\dket}[1]{\ensuremath{|#1\dr}}
\newcommand{\dbra}[1]{\ensuremath{\dl#1|}}
\def\ketbra#1{\def\tempa{#1}\futurelet\next\ketbra@i}% Save first argument
\def\ketbra@i{\ifx\next\bgroup\expandafter\ketbra@ii\else\expandafter\ketbra@end\fi}%Check brace
\def\ketbra@ii#1{\left| \tempa \middle\rangle\!\middle\langle #1 \right|}%Two args
\def\ketbra@end{\left| \tempa \middle\rangle\!\middle\langle \tempa \right|}%Single args
\def\braket#1{\def\tempa{#1}\futurelet\next\braket@i}% Save first argument
\def\braket@i{\ifx\next\bgroup\expandafter\braket@ii\else\expandafter\braket@end\fi}%Check brace
\def\braket@ii#1{\left\langle \tempa \middle| #1 \right\rangle}%Two args
\def\braket@end{\left\langle \tempa \middle| \tempa \right\rangle}%Single args
\def\dketbra#1{\def\tempa{#1}\futurelet\next\dketbra@i}% Save first argument
\def\dketbra@i{\ifx\next\bgroup\expandafter\dketbra@ii\else\expandafter\dketbra@end\fi}%Check brace
\def\dketbra@ii#1{\left| \tempa \middle\dr\!\middle\dl #1 \right|}%Two args
\def\dketbra@end{\left| \tempa \middle\dr\!\middle\dl \tempa \right|}%Single args
\def\dbraket#1{\def\tempa{#1}\futurelet\next\dbraket@i}% Save first argument
\def\dbraket@i{\ifx\next\bgroup\expandafter\dbraket@ii\else\expandafter\dbraket@end\fi}%Check brace
\def\dbraket@ii#1{\left\dl \tempa \middle| #1 \right\dr}%Two args
\def\dbraket@end{\left\dl \tempa \middle| \tempa \right\dr}%Single args
\newcolumntype{L}[1]{>{\raggedright\let\newline\\\arraybackslash\hspace{0pt}}m{#1}}
\newcolumntype{C}[1]{>{\centering\let\newline\\\arraybackslash\hspace{0pt}}m{#1}}
\newcolumntype{R}[1]{>{\raggedleft\let\newline\\\arraybackslash\hspace{0pt}}m{#1}}
\DeclareMathOperator*{\argmax}{argmax}
\declaretheoremstyle[
notefont=\bfseries, notebraces={}{},
bodyfont=\normalfont,
headformat=\NAME~\NUMBER:\NOTE
%,\nobreak=true
]{nopar}
\crefname{tcb@cnt@boxthm}{theorem}{theorems}
\crefname{tcb@cnt@boxlem}{lemma}{lemmas}
\crefname{tcb@cnt@repboxlem}{lemma}{lemmas}
\crefname{tcb@cnt@boxprop}{proposition}{propositions}
\crefname{tcb@cnt@boxdefn}{definition}{definitions}
\crefname{tcb@cnt@boxconj}{conjecture}{conjectures}
\crefname{tcb@cnt@boxcoro}{corollary}{corollaries}
\newcounter{theo}
\crefname{theo}{theorem}{theorems}
\newcounter{lembx}
\crefname{lembx}{lemma}{lemmas}
\newcounter{defbx}
\crefname{defbx}{definition}{definitions}
\theoremstyle{plain}% default
\theoremstyle{definition}
\crefname{thm}{theorem}{theorems}
\newtheorem{protocol}{Protocol}
\crefname{subroutine}{subroutine}{subroutines}
\crefname{protocol}{protocol}{protocols}
\definecolor{nblue}{rgb}{0.2,0.2,0.7}
\definecolor{ngreen}{rgb}{0.1,0.5,0.1}%272
\definecolor{nred}{rgb}{0.8,0.2,0.2}%711
\definecolor{nblack}{rgb}{0,0,0}
\DeclareMathOperator{\tr}{\mathrm{Tr}}
\newcommand{\mc}[1]{\mathcal{#1}}
\newcommand{\mbb}[1]{\mathbb{#1}}
\newcommand{\inner}[2]{\left\langle#1,#2\right\rangle}
\newcommand{\set}[1]{\mathbb{#1}} % set notation
\newcommand{\rep}[1]{\phi(#1)} % Ideal representation notation
\newcommand{\adj}[0]{\nu_{\rm drs.}^{\rm adj.}}
\newcommand{\eff}[0]{\nu_{\rm drs.}^{\rm eff.}}
\newcommand{\nrep}[1]{\nu{(#1)}} % Noisy Channel notation
\newcommand{\fidorbit}[2]{f(\orbit{#1}{#2}) }
\newcommand{\hatfidorbit}[2]{\hat{f}(\orbit{#1}{#2}) }
\newcommand{\orbit}[2]{#1^{\circlearrowright#2}}
\newcommand{\params}[0]{\bm \theta}
\newcommand{\cali}[0]{\set R^m}
\newcommand{\pcal}[1]{p^{\rm cal}(#1|\params)}
\newcommand{\pncal}[1]{p^{\rm \lnot cal}(#1)}
\newcommand{\hidden}[1]{}
\newcommand{\qeclit}[0]{Wagner2022, Wootton2022,Hashim2022,Piveteau2021}
\newcommand{\rcisgood}[0]{Hashim2021,ville2021, Gu2022}
\newcommand{\learnability}[0]{Huang2022,Chen2022}
\newcommand{\hamadd}[0]{Chuang1997,Branderhorst_2009,Shabani2011,Flammia2012,Aaronson2017,Huang2020,Evans2019,FlammiaPopRec,Kunjummen2021,Levy2021,Bertoni2022,Akhtar2022,Hu2022,Chen_2021,Anshu_2021,Wilde2022,Franca2022,Haah2021}
\begin{document}

% \title{The Learning and Compiled Calibration of Cycle Errors in Quantum Computing Architectures}
\title{The Error Reconstruction and Compiled Calibration of Quantum Computing Cycles}

\author[1,4,*]{Arnaud Carignan-Dugas} 
\author[2,*]{Dar Dahlen}
\author[1,4,*]{Ian Hincks}
\author[1,4,*]{Egor Ospadov}
\author[1,3,4,*]{Stefanie J. Beale}
\author[1,4,*]{Samuele Ferracin}
\author[1,4,*]{Joshua Skanes-Norman}
\author[1,3,4,*]{Joseph Emerson}
\author[1,3,4,*]{Joel J. Wallman}

\affil[1]{Quantum Benchmark Inc., 51 Breithaupt Street, Suite 100
Kitchener, ON N2H 5G5, Canada}
\affil[2]{Quantum Benchmark Inc., 2081 Center St
Berkeley, CA }
\affil[3]{Institute for Quantum Computing, University of Waterloo, Waterloo, Ontario N2L 3G1, Canada}
\affil[4]{Keysight Technologies Canada, Kanata, ON K2K 2W5, Canada }
\affil[*]{Affiliations at the time of conducted research. May not correspond to current authors' affiliations.}

\begin{abstract}
Quantum computers are inhibited by physical errors that occur during computation.
For this reason, the development of increasingly sophisticated error  characterization and error suppression techniques is central to the progress of quantum computing.
Error distributions are considerably influenced by the precise gate scheduling across the entire quantum processing unit.
To account for this holistic feature, we may ascribe each error profile to a (clock) cycle, which is a scheduled list of instructions over an arbitrarily large fraction of the chip.
A celebrated technique known as randomized compiling introduces some randomness within cycles' instructions, which yields effective cycles with simpler, stochastic error profiles.
In the present work, we leverage the structure of cycle benchmarking (CB) circuits as well as known Pauli channel estimation techniques to derive a method, which we refer to as cycle error reconstruction (CER), to estimate with multiplicative precision the marginal error distribution associated with any effective cycle of interest. The CER protocol is designed to scale for an arbitrarily large number of qubits. 
Furthermore, we develop a fast compilation-based calibration method, referred to as stochastic calibration (SC), to identify and suppress local coherent error sources occurring in any effective cycle of interest.
We performed both protocols on IBM-Q 5-qubit devices.
Via our calibration scheme, we obtained up to a 5-fold improvement of the circuit performance.
\end{abstract}

\maketitle

\section{Introduction}
Many groups around the world are designing increasingly large quantum computers to try to outperform conventional computers.
Manipulating large quantum systems for sufficiently long periods of time is a major technical challenge because
every step of a quantum computation is prone to physical errors that are unavoidable due to the inherent fragility of quantum systems to external interactions. 
For this reason, error diagnostic and error suppression tools are central to the progress of quantum computing.
Error diagnostic tools are useful because they enable hardware developers to 
identify and quantify the sources of noise in their devices and users of 
quantum computers to test or anticipate the performance of their computation.
Error suppression tools allow users of quantum hardware to further reduce the effect
of errors that remain after the devices are optimized.
In this paper, we introduce both diagnostic and optimization routines, which we refer to as Cycle Error Reconstruction (CER) and Stochastic Calibration (SC), respectively.

 A quantum processing unit (QPU) is a collection of local quantum $d$-level systems, hereafter referred to as qudits, which can be subjected to instructions such as quantum gates and measurements. Often, individual systems are 2-dimensional, and are referred to as qubits. Given their commonality, we will refer to local systems as qubits throughout the paper, although most of the discussion and results apply to more general qudits.  
A \emph{circuit} is a decomposition of a computing task into an explicit list of parallel and/or sequential instructions to apply to qudits. The most common type of instruction is a gate, but instructions can also designate state preparation and measurement operations.
To run a circuit through a QPU, it must be ultimately phrased into instructions that are native to the device’s inner-workings; often, these primitive instructions address only one or two qudits at a time. We shall refer to the set of qubits targeted by an instruction as the instruction's support. 
It is useful to express a circuit as a list of time steps where each step consists of parallel (although not necessarily simultaneously implemented) instructions applied to non-overlapping supports; each such step is referred to as a \emph{cycle}. A cycle is defined with a fixed timing schedule: two identical set of parallel instructions performed with regards to different schedules correspond to two different cycles.

In all QPU implementations, circuits are subject to error processes. 
% Process error diagnostics 
% can usually be sorted as either dynamic-centric, instruction-centric or cycle-centric. 
Process error diagnostics form a wide family, since the meaning of a ``process'' can range from a physical pulse to an entire circuit.

At one end of the spectrum, dynamic-centric protocols 
are focused on the characterization of the infinitesimal generators of motions involved in 
a controlled dynamical evolution (see e.g. \cite{\hamadd}). For example, this type of diagnostics may aim at 
learning the time-dependent Hamiltonian induced in a pulse, or the leading Lindbladian 
jumps affecting the master equation of motion. While information about the dynamics
plays an important role in e.g. pulse-level control, it may not trivially translate into instruction-level error profiles. Even if 
the translation were perfectly reliable, it remains less efficient and precise to obtain instruction-level error profiles from Lindbladian-level information, 
since a fine time resolution of the Lindbladian is necessary 
to integrate the equations of motion that lead to instructions. 

On the other side of the spectrum, high-level benchmarking methods 
provide broad performance metrics which can be tied 
to sets of instructions as in Randomized Benchmarking schemes (RB) \cite{Emerson2005,DCEL2006,Magesan2011,Magesan2012a,Proctor2017,Wallman2017,Merkel2018,Dugas2018,
Magesan2012b,GambettaCorcoles2012,Barends2014dice,Epstein2014,Granade2014,Wallman2014,Dugas2015,
Wallman2015,Wallman2015b,Sheldon2016,Cross2016,Combes2017,Hashagen2018,Brown2018,Franca2018,
Helsen2018,Onorati2018,Proctor2018,Boone2019,Helsen2019,Dirkse2019,Harper2019,Erhard2019,Helsen2020,Proctor2022,Hines2022,Polloreno2023_DRB},  
or to a collection of circuits as in e.g. quantum volume (QV) 
benchmarking \cite{Quantum_volume2019}, cross-entropy benchmarking 
(XEB) \cite{XEB2016,linearxeb}, circuit mirroring \cite{Proctor2020_capabilities}. Such schemes are useful for
QPU cross-comparisons, as well as to broadly delimitate a QPU's 
range in terms of accessible circuit width and depth (see e.g. \cite{RBK2020_volumetric}). Other circuit-level
benchmarking schemes include output accreditation \cite{Ferracin_2019_accreditation, Ferracin2021accreditation}, 
where the idea is to bound the probability of failure given a specific circuit application.

In this work, we add to the family of error diagnostic protocols 
that fall between the two extremes described above. Our goal is 
to characterize -- in details -- elementary processes that generate universal 
quantum computation and that are closely connected to the 
effective discrete instructions present in quantum circuits. 
Characterizing individual instructions in isolation has
important disadvantages since in modern QPUs, error correlations when considering multiple parallel instructions are important. More generally, the global error induced by a computing cycle depends on an \emph{a priori} 
unknown function of the joint instructions at play, as well as on the relative 
timing of those instructions. As such, to ensure that diagnostics schemes 
remain closely tied to the integrated QPU performance, 
cycle-centric characterization approaches were developed. The 
idea is that a cycle captures more context (i.e. the 
combination of instructions and their specific 
schedule) and that such context plays an important role in determining the error profile across the QPU.
Cycle-centric schemes prior to CER include Cycle 
Benchmarking (CB) \cite{Erhard2019}, Average Circuit Eigenvalue Sampling 
(ACES) \cite{FlammiaACES}, and Gate-Set 
Tomography (GST) \cite{Merkel2013,Blume-Kohout2013,Blume_Kohout_2017,Greenbaum2015, Nielsen_2021,Brieger2021,Rudinger_2021_crosstalk}. Note that in GST, cycles are referred to as ($n$-qubit) gates. 

Accounting for all instructions and their relative timing over the QPU is certainly a 
crucial step toward accurately anticipating the performance of computation, but there are 
other possible considerations to make as well. In practice, 
the results of a specific application circuit can generally be improved by 
randomly sampling over equivalent lists of instructions. 
This common error suppression technique, which is widely known to as 
randomized compiling (RC) \cite{Wallman2016}, transforms deterministic instructions into random variables. The effect of RC is highly desirable in many different areas of quantum computing, as it improves the predictability and performance of quantum computations \cite{\rcisgood}.

As we show in more details in \cref{{subsec:effectivedressedcycles}}, the average circuit obtained from randomized compiling can be closely approximated by a sequence of appropriately averaged random cycles which we
refer to as ``effective dressed cycles'' (see \cref{def:edc}). Characterizing effective dressed cycles rather than deterministic instructions on their supported registers provides a relevant and holistic context to error profiles, and thereby closes the gap between errors determined under diagnostic routines and the effective error affecting the application performance. The direct characterization of effective dressed cycles is
proper to CB, CER and ACES, while standard GST involves the characterization of deterministic cycles. 

Both CB and CER differentiate themselves from ACES in that they are expressly built to retrieve error diagnostic estimates of a single effective dressed cycle with 
\emph{multiplicative} precision. That is, for estimating an error rate 
$\epsilon$ to a few decimals, CB and CER only require $O(1/\rm{polylog}
(\epsilon))$ number of runs as opposed to the $O(1/\epsilon^2)$ scaling 
proper to  additive-precision protocols. This is an important feature since 
current error rates can reach $10^{-4}$, which would require over $10^8$ 
runs to resolve with additive-precision protocols.
This limitation of additive-precision protocols arises due to the need to 
resolve fluctuations introduced by shot noise.
The multiplicative precision on the estimate of an error rate $\epsilon$ is 
essentially obtained from the $O(1/\epsilon)$ repetition of a 
fixed target cycle in a circuit \cite{Harper2019}. The strategy 
to boost the precision through channel repetition has the added 
benefit to decouple the cycle error profile from state 
preparation and measurement (SPAM) errors and can be traced back 
to RB. That said, this strategy is not unique to RB schemes, and 
is featured in e.g. long-sequence GST (LGST) \cite{Nielsen_2021}.
Since the trick necessary to obtain a multiplicative precision 
of a targeted cycle error profile involves the purposeful 
substantial degradation of the signal, it means that the number 
of error diagnostic circuits necessary for schemes such as CB 
and CER scales as the number of effective dressed cycles of 
interest. 

Since cycles are composed of local instructions over the QPU, 
combinatorics tell us that there are an exponential number of possible 
cycles in the number of qubits. In practice, however, only a few 
distinct entangling cycles -- which dominate the error profile of the 
effective dressed cycles --  might appear in a quantum computing 
program. Striking examples are the quantum supremacy circuits presented 
in \cite{supremacy2019}, which involved only 4 distinct entangling 
cycles over the 54-qubit Sycamore processor. It is precisely in those 
scenarios where circuits involve few distinct entangling cycles that 
cycle-specific protocols such as CB and CER become particularly relevant.

CB pioneered the characterization of effective dressed 
cycles, and introduced an overall circuit structure that we
re-use in both CER and SC. The structure is illustrated in \cref{fig:cb_structure}. 
As originally formulated, CB provides the total error probability of an effective dressed cycle 
of interest through a randomized SPAM strategy. Simple modifications to 
the SPAM strategy and classical post-processing, inspired by analysis tools
developed in \cite{FW2019,HWFW2019}, 
can allow CB-structured circuits to yield much more than the total error probability. We captured 
different modifications of CB-structured circuits in \cref{fig:pie}.

CER is a CB-structured circuit specifically designed 1) to yield the marginal error distributions
generated by any Clifford effective dressed cycle of interest in an efficient and scalable manner 
(i.e. the noise under RC), and 2) to get estimates up to multiplicative precision 
(i.e. requiring a $O(1/\rm{polylog}(\epsilon))$ number of runs for any error rate $\epsilon$ instead of a number of runs that grows as $1/\epsilon^2$). 
This combined pair of properties is particularly relevant to error mitigation and correction. Indeed, 
multiplicative precision will become crucial if  
we port such characterization schemes to extremely 
low logical error rates (e.g. a $10^{-8}$ logical error rate may require a prohibitive $10^{16}$ runs
to be estimated by an additive-precision protocol). In the context of error mitigation, CER pairs well with a technique known as 
Probabilistic Error Cancellation (PEC) since it provides precise 
error probability estimates directly associated with effective QPU instructions \cite{Temme_2017,Berg2022, Ferracin2022}.
The decay rates estimated in CER are proportional to marginal
(as opposed to total) error probabilities over limited qubit supports, meaning that they do not grow in
the cycle's width unless the process of scaling the architecture itself increases near-local error
probabilities. This makes CER very scalable.

\begin{figure}[h!]
\centering
\def\svgwidth{\columnwidth}
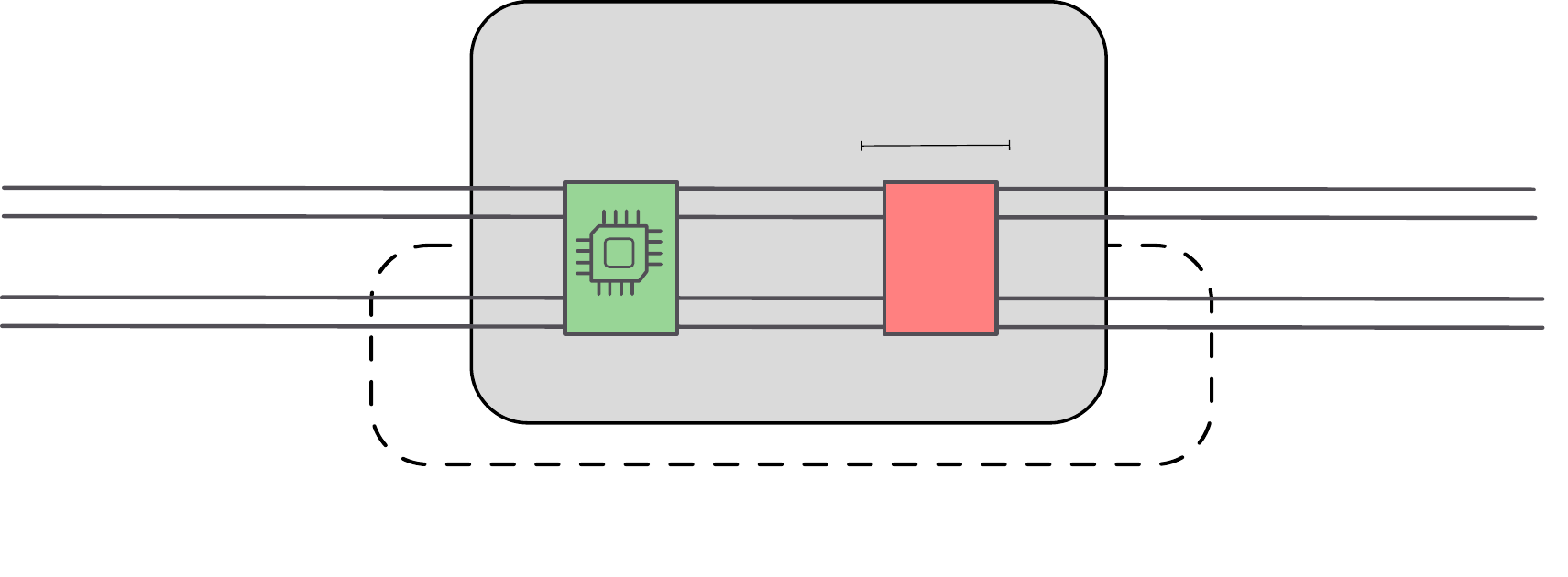
\caption{CB-structured circuits consist of repeated dressed cycles wrapped between state preparation and measurement circuits. The randomization of the easy cycles tailors the dressed cycle into an effective dressed cycle with a stochastic error profile. In this work, we introduce two CB-structured protocols, namely CER and SC. Their differences are highlighted in \cref{fig:pie}.
}
\label{fig:cb_structure}
\end{figure}

In \cite{ORBIT2014}, a protocol known as 
``\emph{Optimized Randomized Benchmarking for Immediate Tune-up}'' (ORBIT)  
is proposed as a quick means to optimize Clifford gate design. 
The idea is to use the results from parallel rounds of 
Clifford-based Interleaved RB as a quickly accessible optimization function. 
The main shortcoming of this method is that Interleaved RB 
is subject to important systematic errors that can contaminate the 
optimization function \cite{Dugas2019unitarity}. To avoid the inaccuracies stemming from Interleaved RB,
one could instead resort to CB as a medium
to quickly explore the optimization landscape. However, this would still be a sub-optimal 
calibration strategy since CB provides a total error probability estimate, a piece of information
that is agnostic to (usually local or near-local) calibration constraints.
In this work, we introduce a fast calibration routine 
-- referred to as {Stochastic Calibration} (SC)-- which leverages the 
knowledge of calibration constraints to reduce the number of necessary SPAM strategies.
This CB-structured scheme retains the multiplicative precision of CB, 
but generally requires fewer circuits, and 
naturally allows for the estimation of near-local optimization functions, 
a consideration that ensures the scalability of the routine.

Our work is divided as follows. In \cref{sec:RC}, we demonstrate how --  
in the presence of Markovian (possibly gate-dependent) errors --
randomized compiling yields effective circuits that can be expressed as a sequence of
effective dressed cycles (\cref{lem:eff}), each with Pauli stochastic error profiles. In \cref{sec:cer}, 
we present our Cycle Error Reconstruction (CER) protocol, and provide
reconstructed error profiles for some of the IBM-Q 5-qubit devices. In \cref{sec:sc}, we introduce our Stochastic Calibration (SC) protocol, and demonstrate a $5$-fold reduction of targeted errors 
in the \texttt{ibmq\_burlington} QPU.

The data displayed in this paper was obtained in 2020 and constitutes one of the first CER+SC data sets.
To understand the time gap between the data acquisition and the paper, first know that 
CER -- which was then referred to as K-body Noise Reconstruction (KNR) -- was conceptualized in 2019, simultaneously as the Pauli Infidelity Estimation (PIE) subroutine described in \cite{FW2019,HWFW2019}. In 2020, CER was patented \cite{knrpatent2018} and incorporated as one of the True-Q software \cite{trueq} error diagnostic tools, together with SC.
The inner-workings of CER were
publicly accessible through the patent description, but naturally lacked the depth and accessibility of an academic paper. As such, the True-Q users that gathered CER diagnostics (such as in \cite{Hashim_2021}, figure 2) naturally resorted to citing the technical papers \cite{FW2019,HWFW2019,Erhard2019}, which are closely related to CER. The purpose of the following paper is therefore to supplement the scientific literature with a rigorous yet pedagogical layout of the inner-workings and motivations behind CER and SC.

\begin{figure}
\centering
\tikzexternaldisable
\begin{tikzpicture}[node distance=2cm]
\tikzstyle{box} = [rectangle, rounded corners, text width=4.1cm,
                    minimum width = 4.5cm, draw=black, align=flush left] 
\tikzstyle{long_box} = [rectangle, 
                    minimum width = 14cm, draw=black, align=flush left]

\tikzstyle{arrow} = [thick, ->, >=stealth]

 \node (diagnostics) [ rectangle,anchor=north ,fill=black!5, text width=\textwidth, minimum height=1.cm] {};
\node (cycle) [box, fill=white!100,text width= 4.5 cm, below right = 0.1cm and -2.25cm of diagnostics.north ] {\vspace{0.cm}\\ \textbf{CB-structured circuits}   \vspace{0.0cm} };

\node (protocols) [ rectangle,fill=yellow!5, text width=\textwidth, minimum height=3.1cm,below=0cm of diagnostics.south west, anchor=north west, minimum height=4.1cm] {};
\node (protocols_txt) [ below right = 0.1cm and 0cm of protocols.north west] {\textbf{Protocols:}};

\node (cb) [box, fill=red!20, below right = 0.6cm and 0.8cm of protocols.north west] {\vspace{0.cm}\\ {\textbf{Cycle Benchmarking (CB)}  \cite{Erhard2019}: Takes an effective dressed cycle of interest, and outputs its process fidelity.} \vspace{0.0cm}};
\node (cer) [box, fill=blue!20, below right = 0.0cm and .1cm of cb.north east] { \vspace{0.cm}\\ \textbf{Cycle Error Reconstruction (CER)}: Takes an effective dressed cycle of interest and 
outputs a specified marginal error distribution (see \cref{sec:cer}).
\vspace{0.cm}};
\node (sc) [box, fill=green!20, below right = 0.0cm and .1cm of cer.north east] {\vspace{0.cm}\\ \textbf{Stochastic Calibration (SC)} : Takes an effective dressed cycle of interest and a set of calibration parameters and outputs calibrated parameters.
(see \cref{sec:sc}). \vspace{0.0cm} };

\node (queries) [ rectangle,fill=black!5, text width=\textwidth, minimum height=2.8cm,below=0cm of protocols.south west, anchor=north west] {};
\node (queries_text) [below right = 0.1cm and 0cm of queries.north west] {\textbf{Queries:}};

\node (cb_query) [box, fill=red!20, below right = 0.6cm and 0.8cm of queries.north west] {\vspace{0.cm}\\ Uniformly random Pauli queries. \vspace{0.1cm}};
\node (cer_query) [box, fill=blue!20, below right = 0.0cm and .1cm of cb_query.north east] { \vspace{0.cm}\\ 
Strategically designed Pauli queries based on the targeted marginal error distribution to learn.
\vspace{0.1cm}};
\node (sc_query) [box, fill=green!20, below right = 0.0cm and .1cm of cer_query.north east] {\vspace{0.cm}\\  Strategically designed Pauli queries based on the targeted optimization landscape to learn. \vspace{0.1cm} };

\node (pier) [ rectangle,fill=yellow!5, text width=\textwidth, minimum height=7cm,below=0cm of queries.south west, anchor=north west] {};
\node (pier_text) [align=left,below right = 0.1cm and 0cm of pier.north west] {\textbf{PIE:}};

\node (query_compression) [box, fill=white!100, below= 0.8cm of cer_query.south, text width= 13.2 cm] {\vspace{0.cm}\\ \textbf{SPAM strategy :} Translate from a list of Pauli queries to a set of state preparation and measurement (SPAM) bases.
Note that a large amount of queries can be obtained from a single SPAM basis. More precisely, a fixed  stabilizer state preparation and measurement strategy on $n$ qubits
encodes $2^n$ distinct Pauli queries \cite{Erhard2019,FW2019,HWFW2019,HFW2020,FlammiaPopRec, helsen2021estimating}.
\vspace{0.0cm}};

\node (pier_core) [box, fill=white!100, below= 0.3cm of query_compression.south, text width= 13.2 cm] {\vspace{0.cm}\\ \textbf{Run CB-structured quantum circuits:} For each SPAM basis, sequentially apply the dressed cycle of interest for different sequence lengths and various random 
dressings \cite{Erhard2019}.
\vspace{0.0cm}};

\node (answer) [box, fill=white!100, below= 0.3cm of pier_core.south, text width= 13.2 cm] {\vspace{0.cm}\\ \textbf{Answer the queries:} Using post-processing methods described in e.g. \cite{Helsen2018,Erhard2019,FW2019,helsen2021estimating}, retrieve, with relative precision, the Pauli infidelity (or the orbital average, see \cref{eq:f_orbital}) for each Pauli in the list of queries.
\vspace{0.0cm}};

\node (post_processing) [ rectangle,fill=black!5, text width=\textwidth, minimum height=4.2cm,below=0cm of pier.south west, anchor=north west] {};
\node (post_processing_text) [align=left,below right = 0.1cm and 0cm of post_processing.north west] {\textbf{Analysing} \\ \textbf{results:}};

\node (cb_results) [box, fill=red!20, below right = 1.1cm and 0.8cm of post_processing.north west] {\vspace{0.cm}\\  Take the sample average of the estimated fidelities as the estimator for the process infidelity of the effective dressed cycle of interest. \vspace{0.0cm}};
\node (cer_results) [box, fill=blue!20, below right = 0.0cm and .1cm of cb_results.north east] { \vspace{0.cm}\\ With the estimated fidelities, use \cref{lem:mufi} to reconstruct the desired marginal error distribution (such as in \cref{fig:Burlington}).
\vspace{0.cm}};
\node (sc_results) [box, fill=green!20, below right = 0.0cm and .1cm of cer_results.north east] {\vspace{0.cm}\\ 
With the estimated fidelities, reconstruct the desired optimization function, and find the calibrated parameters (see \cref{fig:sc_of}). \vspace{0.0cm} };

\draw [arrow] (cycle) to [out=-170,in=90] (cb);
\draw [arrow] (cycle) -- (cer);
\draw [arrow] (cycle) to [out=-10,in=90] (sc);

\draw [arrow] (cb) -- (cb_query);
\draw [arrow] (cer) -- (cer_query);
\draw [arrow] (sc) -- (sc_query); 

\draw [arrow] (cb_query)  to [out=-90,in=160] (query_compression);
\draw [arrow] (cer_query) -- (query_compression);
\draw [arrow] (sc_query) -- (query_compression);

\draw [arrow] (query_compression) -- (pier_core); 
\draw [arrow] (pier_core) -- (answer); 

\draw [arrow] (answer) to [out=-170,in=90] (cb_results);
\draw [arrow] (answer) -- (cer_results);
\draw [arrow] (answer) to [out=-10,in=90] (sc_results);
 
\end{tikzpicture}
\tikzexternalenable
\caption{Cycle Benchmarking (CB), Cycle Error Reconstruction (CER) and Stochastic Calibration (SC) can all be thought of as different query strategies to be sent to the Pauli Infidelity Estimation (PIE) oracle. Of course, each strategy ultimately yields different information.}
\label{fig:pie}
\end{figure}
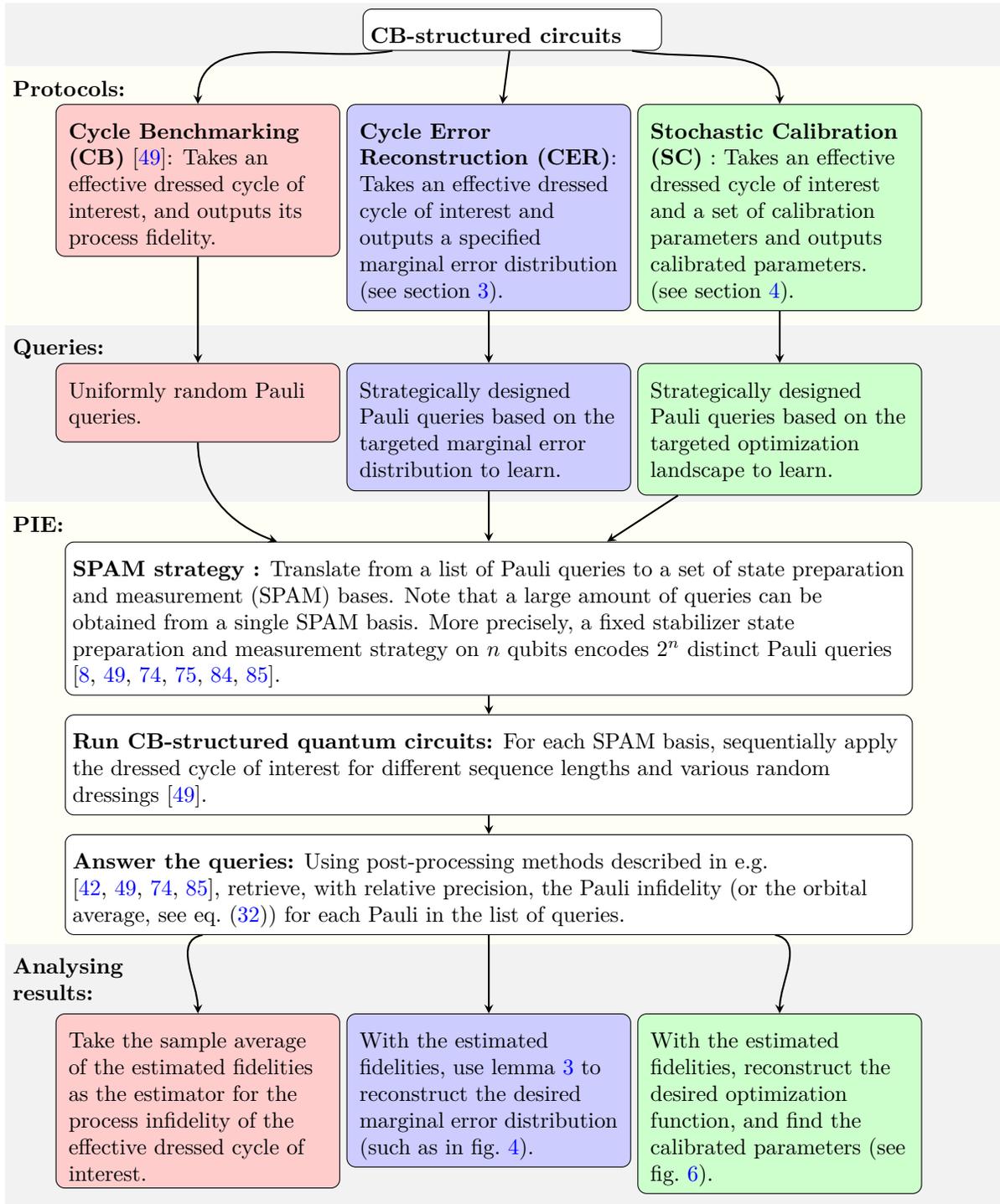

\section{Introductory Material}\label{sec:RC}

\subsection{Cycles and Dressed Cycles}\label{subsec:cycles}

\begin{definition}[Cycle]\label{def:cycle} 
A cycle is a scheduled sequence of (possibly simultaneous) native instructions applied over an arbitrarily large fraction of the QPU.
\end{definition}

The above is an intentionally broad definition of a cycle; the idea behind the concept of cycle is to consider a process that remains close to elementary yet integrated operations (as such, cycles are meant to be short time steps) and to capture important background information (such as the precise scheduling of instructions over the QPU) to properly contextualize error profiles.

A quantum circuit can be expressed as a sequence of cycles, each containing a set of gate operations on the system. Examples of cycles include simultaneous single-qubit rotations, or a set of entangling gates applied to disjoint qubits simultaneously. Some cycles are much more error-prone than others. 
Most of the time, there is an easily-identifiable dichotomy amongst cycles based on their error profiles; using this dichotomy, we categorize cycles as either ``hard" (cycles with gates that are more error-prone) or ``easy" (cycles with gates that are less error-prone). Hard cycles typically consist of entangling gates applied to disjoint qubits simultaneously, while easy cycles commonly consist of parallel single-qubit operations. Every circuit can be decomposed as an alternating sequence of easy and hard cycles so that a circuit is expressed as E-H-E… for easy cycles E and hard cycles H.
There are numerous globally equivalent circuits with identically-scheduled hard cycles, but different easy cycles. It is often advantageous to randomize over equivalent combinations of easy cycles to suppress the effect of some errors. 
This strategy -- known as {randomized compiling} (RC)\cite{Wallman2016} -- is advantageous to state-of-the-art circuit compilers.
In the paradigm where circuits are constructed with randomized compiling, it is only natural to consider every sequential ``easy-hard'' pair of cycles as a cycle itself, which we refer to as a 
\emph{dressed cycle} (see \cref{fig:rc}  {\bf{a}}):

\begin{definition}[Dressed Cycle]\label{def:dressedCycle}
A dressed cycle is a sequential pair of easy and hard cycles. The easy cycle 
may be random, but the hard cycle is fixed.
\end{definition}

Think of a hard cycle 
as the immutable core of a dressed cycle, and the easy cycle as its variable dressing.
Examples include the dressed CNOT on a 2-qubit processor, which may consist 
of a CNOT followed (or preceded) by a random cycle of parallel single-qubit gates.
Every quantum circuit can be expressed as a sequence of dressed cycles.
In fact, there are a number of ways of selecting the easy cycle (or dressing) on each dressed cycle such that the effective circuit is equivalent to the original circuit.
One very effective way to leverage this is to randomize over the equivalent versions of the dressed cycles in a circuit, as is done in RC \cite{Wallman2016} (see \cref{fig:rc} {\bf b}).

\subsection{Noisy and fiducial cycles}

The ideal implementation of a cycle is described by a unitary operator $U \in \bb{U}(2^n)$, and an implementation of $U$ is described by a completely positive and trace-preserving map $\nrep{U}$, that is, $\nrep{U}$ is a linear map (sometimes referred to as a superoperator) with a Kraus operator decomposition
\begin{align}
    \nrep{U}(\rho) = \sum_j K_j \rho K_j^\dag
\end{align}
for some Kraus operators $K_j$. The letter $\nu$ simply stands for ``noisy''.
The ideal implementation of $U$ is the channel $\phi(U)$ defined by
\begin{align}
    \rep{U}(\rho) = U \rho U^\dag.
\end{align}
The letter $\phi$ stands for ``fiducial'', and is also a standard choice in representation theory.

A standard quantum circuit consists of preparing each qubit in an initial state (e.g., the ground state), applying a sequence of cycles $\nrep{U_1}, \ldots, \nrep{U_m}$, and measuring each qubit in a fixed basis (e.g., the energy eigenbasis).
In this paper, we are primarily interested in characterizing the errors in the implementations of cycles rather than errors in state preparation and measurement (SPAM) procedures and so we refer to the ``circuit'' as the sequence of cycles and ignore additional circuit features such as mid-circuit measurements.

\subsection{Randomized Compiling and Effective Dressed Cycles}\label{subsec:effectivedressedcycles}

As mentioned earlier, randomized compiling is based on the recognition that some cycles (particularly, independent single-qubit gates) are relatively ``easy'' to implement with small errors and so we can randomize the easy gates to suppress errors in the ``hard'' gates.
To apply randomized compiling to a circuit, we express the ideal circuit as
\begin{align}\label{eq:riffled}
\rep{E_m} \rep{H_{m-1}} \rep{E_{m-1}} \ldots \rep{H_1} \rep{E_1} \rep{H_0} \rep{ E_0}
\end{align}
where the $E_i$ and $H_i$ are easy and hard cycles respectively and we write the circuit in operator order. Notice that the circuit starts and ends with easy cycles.
The implementation of the bare circuit is denoted as (see \cref{fig:rc} {\bf{a}})
\begin{align}
    \nrep{E_m} \nrep{H_{m-1}} \nrep{E_{m-1}} \ldots \nrep{H_0} \nrep{E_0}.
\end{align}
%\arnaud{Let's change the j indexing with lower case t, and refer to ``time stamp'' indices. This will clarify things and leave the ``ij'' index pair free for later. The letter t also matches Stochastic Processes standard notation. I changed the notation after, and we don't need the pair ij anymore. We can stick to m.}
For a circuit in the form of \cref{eq:riffled}, we can ``twirl'' the 
$i$th hard cycle $H_i$ by compiling twirling gates from some set \bb{T} 
into the adjacent easy gates. 
Specifically, we compile a randomly chosen twirling operation 
$T_i \in \bb{T}$ into $E_{i+1}$ and a correction gate 
$T_i^c := H_i^\dagger T_i^\dag H_i$ into $E_{i}$. Note that we are implicitly assuming that $T_i^c$ is an easy gate for all $i$, that is, that $H_i^\dagger \bb{T} H_i$ is a set of easy gates. In theory, we may 
allow the twirling set $\set T$ to change for each cycle. In practice, a unique twirling set $\set T$ often suffices. This common simplification stems from the common classification of 
parallel local operations as easy cycles.
We define the $i$th randomized dressed cycle to be (see \cref{fig:rc} {\bf{b}})
%\arnaud{Shouldn't it be $D_j= H_j \underbrace{H_{j}^\dagger T_{j+1}^\dagger H_j}_{T_{j+1}^c} E_j T_j$ or $D_j= H_j T_j E_j  \underbrace{H_{j-1} T_{j-1}^\dagger H_{j-1}^\dagger}_{T_{j-1}^c}$?}
\begin{align}
    \nrep{H_i} \nrep{ T_i^c E_i T_{i-1}}~,%H_j T_j^c E_j T_{j-1}.
\end{align}
where $T_{-1}=T_m=I$.
A randomized dressed cycle inherits a stochastic nature, and generally greatly differs from the deterministic dressed cycle $\nrep{H_i} \nrep{E_i}$. For this reason, instead of contrasting $\nrep{H_i} \nrep{E_i}$ with $\nrep{H_i} \nrep{T_i^c E_i T_{i-1}}$, it is more sensible (from a perturbation analysis standpoint) to compare $\nrep{H_i} \nrep{E_i}$ with {a cycle which, in the noiseless case, applies the same mapping. For this reason, we introduce}
\emph{adjusted dressed cycles}:
\begin{definition}[Adjusted dressed cycles]\label{def:adc}
We define an adjusted dressed cycle as:
\begin{align}
    \adj(H_i,E_i|T_i, T_{i-1}):= \phi(T_i)\nrep{H_i} \nrep{T_i^c E_i T_{i-1}}\phi^\dagger(T_{i-1})~.
\end{align}
The adjusted dressed cycle
retains the stochastic nature of the dressed cycle since it depends on the random twirling operations $T_i$, $T_{i-1}$, and in the 
limit where cycles are ideally implemented, we retrieve $\adj(H_i,E_i| T_i, T_{i-1})=\phi(H_iE_i)$. Let's extend the notation to account for the beginning and end of the circuit, and define $\adj(E_m| T_{m-1}):= \nrep{E_m T_{m-1}}\phi^\dagger(T_{m-1})$ and $\adj(H_0,E_0| T_{0}):= \phi(T_0)\nrep{H_0} \nrep{T_0^c E_0}$.
\end{definition}
With this notation, an instance of a randomly compiled (RC) circuit applied to a QPU (\cref{fig:rc} {\bf{b}}),
\begin{align}
   \mc C_{\rm RC}(\vec{T}):=\nrep{E_m T_{m-1}} \nrep{H_{m-1}} \nrep{T_{m-1}^c E_{m-1} T_{m-2}} \cdots \nrep{H_{1}} \nrep{T_{1}^c E_{1} T_0} \nrep{H_{0}} \nrep{T_{0}^c E_{0}}~,
\end{align}
can be re-expressed as a sequence of adjusted dressed cycles (\cref{fig:rc} {\bf{c}})
\begin{align}
   \mc C_{\rm RC}(\vec{T})=\adj(E_m| T_{m-1}) \adj(H_{m-1},E_{m-1}| T_{m-1}, T_{m-2}) \cdots\adj(H_1,E_1| T_1,T_0) \adj(H_0,E_0| T_0)~,
\end{align}
where $\vec{T} := (T_{m-1}, \cdots, T_0) \in \set{T}^m$.
The point of randomized compiling is to take the average over many such circuits in order to approximate
the sample average 
\begin{align}\label{eq:avg_rc}
    \left\langle \mc C_{\rm RC}(\vec{T}) \right\rangle_{\vec{T}}:= \frac{1}{|\set T|^m} \sum_{ \vec{T} \in \set T^m} \mc C_{\rm RC}(\vec{T})~.
\end{align}
\Cref{eq:avg_rc} can be generalized as an integral over compact groups using the natural Haar measure. We are now ready to define 
\emph{effective dressed cycle}:
\begin{definition}[Effective dressed cycles]\label{def:edc}
We define an effective dressed cycle as the average 
implementation of the adjusted dressed cycle:
\begin{align}\label{eq:avg-dressed}
    \eff(H_i,E_i)& :=
    \left\langle \adj(H_i,E_i| T_i, T_{i-1}) \right\rangle_{T_i, T_{i-1}}
    \notag \\
    & = \frac{1}{|\set T|^2} \sum_{T_{i} \in \set T}  \sum_{T_{i-1} \in \set T} \adj(H_i,E_i|T_i, T_{i-1})~.
\end{align}
\end{definition}
Let also $\eff(E_m) :=
\left\langle \adj(E_m| T_{m-1}) \right\rangle_{T_{m-1}}$ and $\eff(H_0, E_0) :=
\left\langle \adj(H_0, E_0| T_{0}) \right\rangle_{T_{0}}$. For $i \notin \{0,m\}$, as $\adj(H_i,E_i| T_i, T_{i-1})$ 
and $\adj(H_{i+1},E_{i+1}| T_{i+1}, T_i)$ both depend on $T_i$, the average overall operation applied to a QPU will not \emph{exactly} factorize, that is, in general we have
\begin{align}
    \left\langle \mc C_{\rm RC}(\vec{T}) \right\rangle_{\vec{T}} \neq \eff( E_m) \eff(H_{m-1},E_{m-1}) \ldots 
   \eff(H_1,E_1)
   \eff(H_0,E_0)~.
\end{align}

Nevertheless, as we show in \cref{app:rc}, the expected circuit factorizes as a sequence of effective dressed cycles to a great
approximation, as expressed in the following lemma  (see \cref{fig:rc} {\bf{d}}):
\begin{lemma}\label{lem:eff}
The average RC circuit approximately factorizes as a sequence of effective dressed cycles,
        \begin{align}\label{eq:avg_circuit}
        \left\langle \mc C_{\rm RC}(\vec{T}) \right\rangle_{\vec{T}}  & \approx 
        \eff( E_m) \eff(H_{m-1},E_{m-1}) \ldots 
       \eff(H_1,E_1) \eff(H_0,E_0) 
    \end{align}
where the corrections are second order in the deviations $\adj(H_i ,E_i|T_i, T_{i-1}) -\eff(H_i,E_i) $ and linear in the number of cycles.
\end{lemma}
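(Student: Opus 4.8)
The plan is to expand the twirl average around the factorized product $\eff(E_m)\,\eff(H_{m-1},E_{m-1})\cdots\eff(H_0,E_0)$ and track which terms survive. First I would write each adjusted dressed cycle as a mean plus a fluctuation: set $\delta_i := \adj(H_i,E_i|T_i,T_{i-1}) - \eff(H_i,E_i)$ for $1\le i\le m-1$, and likewise $\delta_m := \adj(E_m|T_{m-1}) - \eff(E_m)$ and $\delta_0 := \adj(H_0,E_0|T_0) - \eff(H_0,E_0)$ for the two boundary factors. By \cref{def:edc}, each fluctuation has vanishing \emph{full} average over its own twirls, $\langle\delta_i\rangle = 0$. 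Substituting $\adj(\cdot) = \eff(\cdot) + \delta_i$ into every factor of $\mc C_{\rm RC}(\vec T)$ and multiplying out produces, in the same operator order, a sum over subsets $S\subseteq\{0,\dots,m\}$: in the $S$-term the cycles indexed by $S$ carry a $\delta$ and all others carry the corresponding $\eff$. The $S=\emptyset$ term is exactly the factorized circuit claimed in \cref{eq:avg_circuit}.

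Next I would average over $\vec T\in\set T^m$ and show that nearly every term vanishes. The factors indexed outside $S$ are deterministic, so the average only acts on $\prod_{i\in S}\delta_i$. The structural point is locality of the twirl dependence: $\delta_i$ depends on $\vec T$ only through $(T_i,T_{i-1})$, so if some $i\in S$ has neither neighbour $i\pm1$ in $S$, then within that term the twirls $T_i,T_{i-1}$ occur in the single factor $\delta_i$ and nowhere else. Conditioning on all other twirls, the inner average of that factor is $\langle\delta_i\rangle_{T_i,T_{i-1}} = 0$ as a superoperator, which annihilates whatever stands beside it (no commutation needed), so the term drops out. Hence only subsets with no isolated point contribute; the smallest are the $m$ adjacent pairs $S=\{j,j+1\}$, each giving a term that is a product of $\eff$ factors with the single insertion $\langle\delta_{j+1}\delta_j\rangle$ in place of $\eff(H_{j+1},E_{j+1})\eff(H_j,E_j)$, manifestly quadratic in the fluctuations. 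Bounding in a submultiplicative norm for which CPTP maps are non-expanding (e.g.\ the diamond norm), and using that each $\eff(\cdot)$ is itself an average of CPTP maps hence non-expanding, these $m$ terms furnish the announced leading correction, while every remaining term carries at least three $\delta$ factors.

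The main obstacle is making ``linear in the number of cycles'' precise: the number of subsets of size $2r$ with no isolated point grows like $m^r$, so the higher-order-in-$\delta$ corrections come with superlinear combinatorial prefactors that one must show stay subdominant. I would organize the correction sum by the number $r$ of maximal runs of consecutive indices in $S$: such a term has at least $2r$ fluctuation factors and is bounded by $c^r\binom{m}{r}\eta^{2r}$ for some constant $c$, with $\eta := \max_i\lVert\delta_i\rVert$, so resumming over $r\ge 1$ yields a bound of the form $m\eta^2\bigl(1+O(m\eta^2)\bigr)$ — valid precisely in the regime of small per-cycle fluctuation and moderate depth that the phrase ``second order in the deviations and linear in the number of cycles'' is meant to capture. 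The residual bookkeeping — keeping the operator ordering straight when factoring independent pieces out of the average, and treating the two special boundary factors $\eff(E_m)$ and $\eff(H_0,E_0)$ on the same footing — is routine.
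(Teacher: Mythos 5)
Your proposal is correct, and it rests on the same cancellation mechanism as the paper's own proof --- writing each adjusted cycle as $\eff(H_i,E_i)+\delta_i$ and exploiting that $\langle\delta_i\rangle_{T_i,T_{i-1}}=0$ while $\delta_i$ and $\delta_j$ involve disjoint twirl variables whenever $|i-j|\ge 2$ --- but the bookkeeping is genuinely different. The paper peels the product recursively from the left: at each step the zeroth-order piece factors out as a deterministic effective cycle, the first-order piece vanishes because the deviation is the only factor carrying the relevant twirls, and the recursion is stopped at the second-order remainder, which still contains the full \emph{adjusted} product over the earlier cycles inside the expectation. This produces an exact identity with exactly $m$ correction terms, each manifestly quadratic in the deviations, so ``second order in the deviations and linear in the number of cycles'' is read off with no norm estimates, no resummation, and no smallness assumption. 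Your route instead expands the whole product over subsets $S$ of deviation-carrying positions, eliminates every subset containing an isolated index by the same conditional-averaging argument (your remark that the zero superoperator annihilates its neighbours with no commutation needed is exactly the right justification, including at the two boundary factors, whose deviations average to zero over a single twirl), and identifies the $m$ adjacent-pair insertions $\langle\delta_{j+1}\delta_j\rangle$ sandwiched between effective cycles as the leading correction. What this buys is a fully explicit leading-order formula in terms of effective cycles and pairwise deviation correlators, together with a quantitative bound of the form $m\eta^2\bigl(1+O(m\eta^2)\bigr)$; what it costs is the combinatorial control of all longer runs and the restriction to the regime $m\eta^2\ll 1$, which the paper's telescoping sidesteps entirely. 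Your run-counting is sound --- distinct runs occupy disjoint twirl variables, so the average factorizes over runs, and a norm in which CPTP maps are contractive (the diamond norm) disposes of the spectator factors --- so the residual work you flag is indeed routine.
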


\begin{figure}[h!]
\centering
\def\svgwidth{\columnwidth}
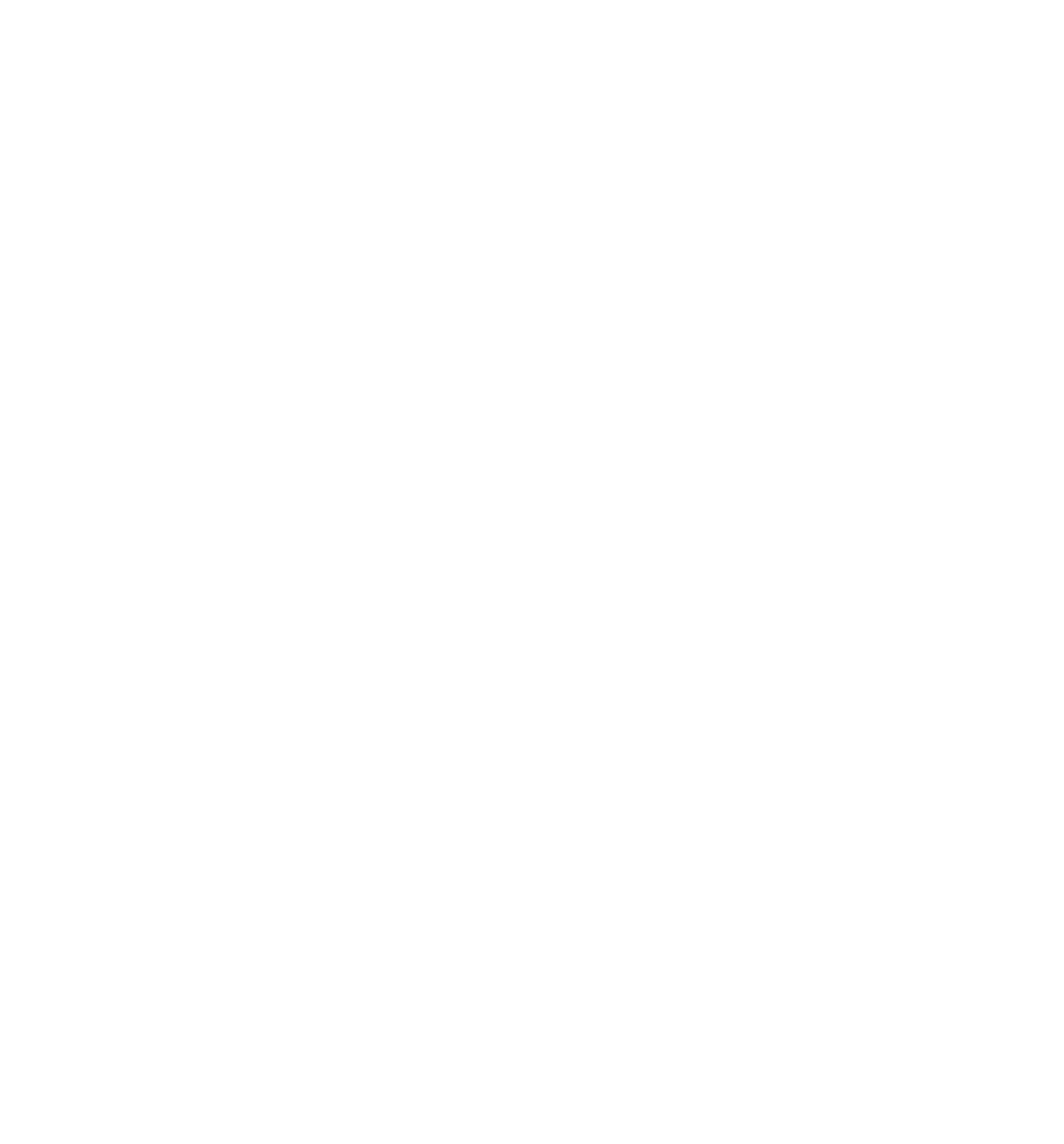
\caption{Circuit representations expressed in terms of cycles. Notice that the time flows from the right to the left to accommodate the matrix notation below each cycle. {\bf a)} An application circuit can always be expressed as an alternating sequence of easy and hard cycles or, equivalently, as
a sequence of dressed cycles. {\bf b)} Randomized compiling (RC) consists of sampling the 
easy cycle dressings in a way that doesn't alter the overall circuit \cite{Wallman2016}. 
{\bf c)} Dressed cycles can be re-expressed with virtual adjustments. {\bf{d)} } As shown in \cref{lem:eff}, 
the average RC circuit can be closely approximated by a sequence of effective dressed cycles, even if the easy cycles are not perfect, and even if errors are gate-dependent.
}
\label{fig:rc}
\end{figure}

\subsection{The Error Profile of Effective Dressed Cycles}

The above analysis holds for arbitrary twirling sets, including $n$-fold Weyl-Heisenberg groups, meaning that RC, CB and CER can be performed on \emph{qudit-based} systems \cite{Goss2022}. For the current paper, however, we are primarily interested in Pauli twirling, wherein $\bb{T} = \bb{P}_n = \{I, X, Y, Z\}^{\otimes n}$.
When the hard cycles are multi-qubit Clifford gates, $H_i^\dag \bb{P}_n H_i = \bb{P}_n$ by definition and so the correction gates can be compiled into the easy gates, which consist of tensor products of arbitrary single-qubit gates, as required.

In \cref{app:twirled_error}, we demonstrate the following lemma:

\begin{lemma}\label{lem:pauli_channel}
Let the effective dressed cycle be a composition of the form
\begin{align}\label{eq:noise_stoch}
    \eff(H_i,E_i) = \phi(H_i E_i) \mc S_i~.
\end{align}
Then, $\mc S_i$ is a Pauli stochastic channel of the form
\begin{align}\label{eq:channel_pauli_sum}
    \mc{S}_i(\rho) = \sum_{P \in \bb{P}_n} p_i(P) P \rho P^\dag
\end{align}
for some probability distribution $p_i$ if either of these two conditions hold:
\begin{enumerate}
    \item If the cycles $H_i$ and $E_i$ are Cliffords;
    \item If the easy cycles have a fixed error, i.e. $\nrep{E_i}= \mc E \circ \rep{E_i}$.
\end{enumerate}
\end{lemma}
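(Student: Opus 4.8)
The plan is to unfold the definition of the effective dressed cycle, slide every fiducial map onto one side using that $\phi$ is multiplicative, and recognize what is left over as a Pauli twirl. Write each noisy cycle as an error channel followed by its fiducial channel, $\nrep{G}=\error{G}\,\phi(G)$, and substitute into $\adj(H_i,E_i|T_i,T_{i-1})=\phi(T_i)\,\nrep{H_i}\,\nrep{T_i^c E_i T_{i-1}}\,\phi^\dagger(T_{i-1})$. Since $\phi(AB)=\phi(A)\phi(B)$ and $\phi^\dagger(T)=\phi(T)^{-1}$, the fiducial part of the compiled easy cycle telescopes against the trailing twirl, $\phi(T_i^cE_iT_{i-1})\,\phi^\dagger(T_{i-1})=\phi(T_i^cE_i)$, and the defining identity $T_i^c=H_i^\dagger T_i^\dagger H_i$ rewrites $\phi(T_i^c)=\phi^\dagger(H_i)\,\phi^\dagger(T_i)\,\phi(H_i)$. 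Collecting terms and using $\phi(T_i)=\phi^\dagger(T_i)$ for Paulis, one arrives at
\begin{align}
\adj(H_i,E_i|T_i,T_{i-1}) \;=\; \phi(T_i)\,\mc K_i(T_i,T_{i-1})\,\phi^\dagger(T_i)\;\phi(H_iE_i),
\end{align}
where $\mc K_i(T_i,T_{i-1}):=\error{H_i}\,\phi(H_i)\,\error{T_i^cE_iT_{i-1}}\,\phi^\dagger(H_i)$ is built only from the hard- and easy-cycle error channels; this is a conjugation by $\phi(T_i)$ of $\mc K_i$, followed by the ideal $\phi(H_iE_i)$ (the $i\in\{0,m\}$ endpoints are the same computation with $T_{-1}=T_m=I$).

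Averaging over $T_i\in\bb P_n$ turns the prefactor into $4^{-n}\sum_{T_i\in\bb P_n}\phi(T_i)\,\mc K_i\,\phi^\dagger(T_i)$, i.e.\ the $n$-qubit Pauli twirl of $\mc K_i$. The second ingredient is the standard fact that the Pauli twirl of any linear map is a Pauli stochastic channel $\sum_{P\in\bb P_n}p(P)\,P\rho P^\dagger$: in the Pauli-transfer-matrix basis the maps $\phi(T)$, $T\in\bb P_n$, are simultaneously diagonal with $\pm1$ entries, and orthogonality of the characters $T\mapsto(-1)^{\langle P,T\rangle}$ of $\bb P_n$ forces the off-diagonal entries of the average to vanish. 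Hence, \emph{provided $\mc K_i$ does not itself depend on the twirl gates}, $\eff(H_i,E_i)$ equals a Pauli stochastic channel composed with $\phi(H_iE_i)$; and when $H_iE_i$ is a Clifford, conjugating a Pauli channel by $\phi(H_iE_i)$ only permutes the $P$'s and relabels the weights, so the Pauli factor may be placed on either side of $\phi(H_iE_i)$, giving exactly $\eff(H_i,E_i)=\phi(H_iE_i)\,\mc S_i$ with $\mc S_i$ as in \cref{eq:channel_pauli_sum}.

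The obstacle — and the reason for the two alternative hypotheses — is the residual dependence of $\mc K_i$ on the twirl gates through $\error{T_i^cE_iT_{i-1}}$, the error of the compiled easy cycle; with that dependence present the $T_i$-average is not an honest twirl. Under condition 2 it is absent by hypothesis: $\error{T_i^cE_iT_{i-1}}=\mc E$ for every compiled easy cycle, so $\mc K_i$ is a fixed channel and the twirl argument applies directly. Under condition 1 one removes it by averaging over $T_{i-1}$ first: because $H_i$ is Clifford the correction $T_i^c=H_i^\dagger T_i^\dagger H_i$ is itself a Pauli, and because $E_i$ is Clifford $E_i\bb P_n E_i^\dagger=\bb P_n$, so the compiled easy cycle equals $T_i^c\,(E_iT_{i-1}E_i^\dagger)\,E_i=S\,E_i$ with $S:=T_i^c\,(E_iT_{i-1}E_i^\dagger)$ uniformly distributed over $\bb P_n$ as $T_{i-1}$ runs over $\bb P_n$, for \emph{every} fixed $T_i$; consequently $\langle\error{T_i^cE_iT_{i-1}}\rangle_{T_{i-1}}=4^{-n}\sum_{S\in\bb P_n}\error{SE_i}$ is a fixed channel independent of $T_i$ (here one uses that $\error{\cdot}$ is a function of the unitary implemented and that Pauli phases are invisible at the channel level), and we are back in the situation of condition 2. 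Since in condition 1 $H_iE_i$ is Clifford, the rewriting of the previous paragraph then delivers the stated form. I expect the only genuinely delicate point to be this condition-1 reduction — checking that the $T_{i-1}$-average is truly independent of $T_i$, with careful Pauli-phase bookkeeping; everything else is the routine "twirl annihilates off-diagonals" computation.
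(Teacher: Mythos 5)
Your argument is correct, and at bottom it runs on the same two ingredients as the paper's proof in \cref{app:twirled_error}: (i) make the error of the compiled easy cycle independent of the outer twirl variable --- by hypothesis under condition 2, and by a Clifford change of variables under condition 1 --- and (ii) invoke the fact that the Pauli twirl of a CPTP map is a Pauli stochastic channel. The packaging differs: the paper factors the ideal $\phi(H_iE_i)$ out on the left from the very start, so your ``average over $T_{i-1}$ first'' step appears there as the substitution $Q'=E_i^{-1}QE_i$, $R'=Q'R$ in \cref{eq:easy_cliff}, and the paper never has to commute a Pauli channel past a Clifford; you instead pull $\phi(H_i)$ out through $T_i^c=H_i^\dagger T_i^\dagger H_i$ and restore the stated ordering at the end by conjugating through $\phi(H_iE_i)$. (Your worry about phase bookkeeping is benign: the noisy implementation attaches to the compiled easy instruction, which is defined only up to a global phase, so $\error{SE_i}$ is well defined.) The one consequence of your ordering is the caveat you flag yourself: under condition 2 your conversion to the right-sided form \cref{eq:channel_pauli_sum} invokes Cliffordness of $H_iE_i$, which condition 2 does not grant. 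You are no worse off than the paper here --- its condition-2 result, \cref{eq:GD_Pauli}, is a Pauli channel conjugated by $\phi(E_i)$, which is literally of the form \cref{eq:channel_pauli_sum} only when $E_i$ is Clifford, and the paper simply calls it ``purely stochastic'' --- and the hard-cycle half of your requirement is implicit anyway, since $H_i$ must normalize $\bb P_n$ for the Pauli correction gates to be compilable. So for non-Clifford easy cycles with a fixed error, both your proof and the paper's actually deliver a Pauli channel conjugated by the ideal easy gate; it is worth saying that explicitly rather than asserting the exact Pauli form. A final nicety: when you state that the twirl of ``any linear map'' is a Pauli stochastic channel, the diagonalization is indeed general, but the nonnegativity and normalization of the $p(P)$ use that the twirled map (your $\mc K_i$, a composition and average of CPTP maps) is CPTP.
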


The second condition implies that \cref{eq:channel_pauli_sum} always holds to order $\nrep{E_i}\phi(E_i^{-1}) - \langle\nrep{E_i}\phi(E_i^{-1}) \rangle_{E_i}$, which is small from the assumption that easy gates have a high fidelity.
With a bit more consideration in the compiling step, we can make \cref{eq:channel_pauli_sum} hold exactly in all cases (see \cref{app:twirled_error}). The idea is to treat parallel rounds of $\pi/8$ phase rotations as hard cycles
(even though they aren't typically error \-prone). The twirling group for such cycle is slightly modified 
to be the group generated by local Paulis and 
$H_i^2$. That is, in the instance where $H_i=Z_{\pi/8}^{\otimes n}$ and $E_i=I$, then the implementation will look like
\begin{align}
    \nrep{T_{i+1}^cE_{i+1} T_{i}}\nrep{Z_{\pi/8}^{\otimes n}}\nrep{T_i^c T_{i-1}}~,
\end{align}
where $T_i \in \set D_4^{\otimes n}$ ($\set D_4$ is the dihedral group 
generated by Paulis and $Z_{\pi/4}$) and $T_{i}^c$ is still defined as 
$H_i^\dag T_i H_i$.
From \cref{eq:channel_pauli_sum}, the implemented circuit averaged over all 
randomizations is
\begin{align} \label{eq:RC_avg}
    \left\langle \mc C_{\rm RC}(\vec{T}) \right\rangle_{\vec{T}} \approx  \phi(E_m)\mc S_{m} \phi(H_{m-1}E_{m-1}) \mc S_{m-1} \cdots \phi(H_1E_1) \mc S_1 \phi(H_0 E_0) \mc S_0~.
\end{align}

\section{Learning Cycle Errors}\label{sec:cer}
In the previous section, we have seen that by structuring circuits into serial dressed cycles, and by 
allowing easy cycles to be appropriately randomized, we effectively obtain ideal cycles $\phi(H_i E_i)$ interleaved with 
stochastic errors $\mc{S}_i$. The general form of stochastic errors depends on the randomizing sets, but 
when twirling over $n$-fold Paulis $\set P_n$, $\mc{S}_i$ becomes a probabilistic sum over Pauli errors, as depicted in \cref{eq:channel_pauli_sum}. In this section, our goal is to learn the most likely errors appearing in the probability distribution $p_i$. In the next sections, we omit the $i$ indexing for the sake of clarity. 

In full generality, the error probability distribution $p_i$ has a domain of $4^n$ distinct Paulis (including the identity). 
In practice, however, it is often the case that a large proportion 
of error scenarios are so improbable that they can be neglected.
In such instances, we say that the errors can be well described via 
a \emph{reduced model}, which in our case corresponds to a set of constraints over the error probability distribution.

A simple example of a reduced model would be to
restrict the distribution to single-qubit errors and errors that are shared between neighboring qubit pairs. 
In a planar architecture, this nearest-neighbors model would reduce the number of error scenarios to $O(n)$. 
As we will see, reduced models motivate the 
recourse to marginal probabilities, which are defined in the next subsection. 
Characterizing quantum error channels through marginals, as proposed in e.g.
\cite{Emerson2007}, is an efficient way to avoid the exponential scaling of 
fine-grained error profiles.

\subsection{Marginal Pauli Distributions}

Before defining marginal probabilities over sets of Paulis, 
we introduce some elementary notation.
Let $\set A$ be a set of qubit indices (e.g. $\{0,2\}$),
and let $\set A^{c}$ be the complement of $\set A$ with respect to $\set Z_n$ (i.e. $\set Z_n - \set A$). Let
$\set P_{\set A} \subseteq \set P_n$ refer to the subgroup of $4^{|\set A|}$ Paulis which act as the identity on the 
qubits indexed by $\set A^c$. The instance $P_{\{i\}}$ is simplified to the more
common notation $P_i$. 
For a Pauli $P_{\set A} \in \set P_{\set A}$, we
say that $\set A$ is the support of $P_{\set A}$.
The weight of a Pauli error $w(P)$ is the number of non-identity Paulis appearing in $P$. The weight is upper-bounded by the size of the support: $w(P_{\set A}) \leq |\set A|$.
Every Pauli $P \in \set P_n$ can be expressed as a product $P=P_{\set A} P_{\set A^c}$ where
$P_{\set A} \in \set P_{\set A}$, $P_{\set A^c} \in \set P_{\set A^c}$.

\begin{definition}[Marginal Pauli Probability]
We define the marginal probability 
$\mu(P_{\set A})$ as
\begin{align}\label{eq:marginals}
    \mu(P_{\set A}):= \sum_{Q_{\set A^c} \in \set P_{\set A^c}} p(P_{\set A} Q_{\set A^c})~,
\end{align}
which defines a distribution over the set $\set P_{\set A}$. Accordingly, we refer to $\set A$ as the qubit support of the marginal distribution.
\end{definition}
For instance, with $\set A=\{0,2\}$ and $IX_{\{0,2\}} \in \set P_{\{0,2\}}$ then 
\begin{align}
    \mu(IX_{\{0,2\}})= \sum_{\substack{Q_i \in \{I,X,Y,Z\}\\ i \notin \{0,2\}}} p(I_0 \otimes Q_1 \otimes X_2 \otimes Q_3 \cdots \otimes Q_{n-1})~.
\end{align}
%The set $\set A$ associated with a marginal distribution $\mu$ is often implicitly defined. 
%For instance, the set associated with $\mu(IZZ_{\{2,4,5\}})$ is $\set A=\{2,4,5\}$. 
We use $\mu(I)$ as a shorthand
for $\mu(I \cdots I_{\set Z_n})$; $\mu(I)$ is commonly known as the average process fidelity, which is generally defined as follows.

\begin{definition}[Average Process Fidelity]
Given an error channel $\mc E$ with Kraus decomposition
\begin{align}
    \mc E(\rho) = \sum_i K_i \rho K_i^\dag~,
\end{align}
then the average process fidelity of $\mc E$ is defined 
as \cite{Dugas2019unitarity}
\begin{align}
    f_P(\mc E) := \sum_i \frac{\left|\tr K_i \right|^2}{d^2}~.
\end{align}
In the instance where $\mc E$ is a Pauli stochastic channel, 
$f_P(\mc E)= \mu(I)$.
\end{definition}

The conceptual advantage of reconstructing marginals rather than the full distribution $p$ is that they naturally offer coarse-grained pictures of $p$.
Notice that $p(P_{\set A})$ can be interpreted as a special case of a marginal $\mu(P_{\set A})$ when $\set A=\set Z_n$.

If we are starting from a reduced error model, it is often straightforward to use a few
marginal probability distributions to obtain a global error probability distribution 
$p$ over $\set P_n$. Indeed, typical reduced error models -- such as the ones assuming low-weight errors (e.g. $w(P) \leq 2$)
or nearest-neighbour interactions -- involve strong constraints on \cref{eq:marginals}. That is, many probabilities do not contribute to the sum in \cref{eq:marginals}.
Obtaining probabilities from marginals is then a matter of inverting a simplified system of linear equations. 
For instance, if we assumed an error model with Pauli weights of at most $2$, we would get, for $i\neq j $, $P,Q \in \{X,Y,Z\}$,
\begin{align}
        p(PQ_{\{i,j\}})&= \mu(PQ_{\{i,j\}})~; \label{eq:pmu1} \\
        p(P_{i})&= \mu(P_{i}) - \sum_{\substack{k \neq i\\ Q \in \{X,Y,Z\}}} \mu(PQ_{\{i,k\}})~.\label{eq:pmu2}
\end{align}
Perhaps more importantly, reduced error models can be deduced or validated from 
marginal distributions. For instance, if we observed that the marginals of weight-$3$ Paulis were 
insignificant for all sets $\set A$ of $3$ qubits,
then we could safely deduce that significant errors are at most weight-2. More formally, let 
\begin{align}
    \mu(P_{\set A}) < \epsilon, ~~~~~\forall P_{\set A} , ~\set A: w(P_{\set A})=|\set A|=3~,
\end{align}
so that all Pauli errors with weight 3 have marginals less than $\epsilon$ for some $\epsilon$ negligible compared to $1-\mu(I)$. Then, we get {bounds on the probabilities of errors with weight 2 and an expression for the probabilities of weight 1 errors}, analogously to 
\cref{eq:pmu1,eq:pmu2},
\begin{align}
        \mu(PQ_{\{i,j\}})-3 \epsilon < p(PQ_{\{i,j\}})&\leq \mu(PQ_{\{i,j\}})~; \\
        p(P_{i})&= \mu(P_{i}) - \sum_{\substack{k \neq i\\ Q \in \{X,Y,Z\}}} \mu(PQ_{\{i,k\}})~.
\end{align}
{That is, if $\epsilon$ is negligible, then we can say that $p(A)\approx \mu(A)$ for $\{A\in \mathbb{P} : w(A)\leq 2\}$}.

\subsection{H-orbits}
We've defined marginal distributions $\mu$ over the sets $\set P_{\set A}$. Now, we define $H$-orbits:
\begin{definition}[$H$-orbits]
Let $H$ be a hard cycle such that 
$H \set P_{\set A} H^{-1}= \set P_{\set A}$. We define $H$-orbits and sets of orbits as:
\begin{align}\label{def:horbit}
    \orbit{P_{\set A}}{H}&:= \{H^{j} P_{\set A} H^{-j}|~ j \in \set N\}~, \\
    \orbit{\set P_{\set A}}{H}&:= \{\orbit{P_{\set A}}{H}| ~P_{\set A}\in \set P_{\set A}\}~, 
\end{align}
\end{definition}

 For example, if $H$ acts as a phase gate $\sqrt{Z_2}$ on qubit $\{2\}$ then, 
$\orbit{X_2}{H}=\orbit{Y_2}{H}=\{X_2,Y_2\}$, $\orbit{Z_2}{H}=\{Z_2\}$, and $\orbit{\set P_{2}}{H}=\{\{I_2\},\{X_2,Y_2\},\{Z_2\} \}$. 
From these definitions, we can naturally define a marginal distribution over $\orbit{\set P_{\set A}}{H}$ through
\begin{align}
    \mu(\orbit{P_{\set A}}{H}):= \sum_{Q_{\set A} \in \orbit{P_{\set A}}{H}} \mu(Q_{\set A})~.
\end{align}
In our previous example, the marginal distribution 
over $\orbit{\set P_{2}}{\sqrt{Z_2}}$ would contain 3 distinct marginal probabilities, namely $\mu(I_2)$, $\mu(Z_2)$
 and $\mu(\orbit{X_2}{\sqrt{Z_2}})=\mu(X_2,Y_2):= \mu(X_2)+\mu(Y_2)$. 

\subsection{Estimating Pauli Fidelities}\label{sec:fidelity}

As we emphasize in the present work, an added benefit of recasting a circuit in terms of effective dressed cycles is that since the error channel is tailored into a stochastic channel, these cycles 
can be individually and efficiently characterized via a subroutine which we refer to as Pauli Infidelity Estimation (PIE) 
\cite{FW2019,HWFW2019}. 
Since PIE is well detailed in \cite{FW2019}, we treat it here as an 
oracle (see \cref{proto:pie} for more details): given a set of queries regarding an effective dressed cycle of interest (the queries take
the form of a set of Pauli observables $\set S = \{P_1, P_2, \cdots\}$ ) PIE yields an answer in the form of a set of corresponding Pauli fidelities $f(P)$, which are defined through:
\begin{definition}[Pauli fidelities] Given a channel $\mc S$, the fidelity of $P$ is defined as
\begin{align}
    f(P):= \inner{P}{\mc{S}(P)}~,
\end{align}
where we use the normalized Hilbert-Schmidt inner product
\begin{align}
\langle A, B \rangle = 2^{-n} \tr A^\dag B.
\end{align}
\end{definition}
In the most general scenario (i.e. when no assumptions are made on the SPAM errors), some of the Pauli fidelities cannot be individually estimated through the RB-like protocols described in \cite{FW2019}; to be more precise, given a hard gate $H$, only the following geometric orbital averages can be estimated:
\begin{align}\label{eq:f_orbital}
    \fidorbit{P}{H}_{\rm geo}:=\left[\Pi_{R \in \orbit{Q}{H}} f(R)\right]^{\frac{1}{|\orbit{Q}{H}|}}~,
\end{align}
where $\orbit{P}{H}$ is a set of Paulis called the $H$-orbit of $P$, and is defined in \cref{def:horbit}.
For simplicity, we resort to the more intuitive arithmetic orbital averages,
\begin{align}\label{eq:f_orbital}
    \fidorbit{P}{H}:=\frac{1}{|\orbit{P}{H}|}\sum_{Q \in \orbit{P}{H}} f(Q)~,
\end{align}
the reasoning being that the geometric an arithmetic means are equivalent up to order ${\rm Var}_{R \in \orbit{Q}{H}} \left(f(R)\right)$, and that this discrepancy is often much smaller than the precision on the estimates in practice. 
%Some of the Pauli fidelities cannot be individually estimated through the RB-like protocols described in []; to be more precise, given a hard gate $H$, only the following orbital averages can be estimated:
The number of terms in the RHS of \Cref{eq:f_orbital} is usually small. 
Indeed, in most typical cases hard cycles have a small cyclicity and thus generate small orbits. For example, 
if {the hard gate} $H={\rm CZ}$ on the qubit pair $\{1,2\}$, then $\orbit{XX_{\{1,2\}}}{H}=\{XX_{\{1,2\}},YY_{\{1,2\}}\}$ and
\begin{align}
    f(\orbit{XX_{\{1,2\}}}{H})=f(XX_{\{1,2\}},YY_{\{1,2\}})=\frac{1}{2}\Big[f(XX_{\{1,2\}})+f(YY_{\{1,2\}})\Big]~.
\end{align}
Gates such as $\rm CX$, $\rm CZ$, $\rm SWAP$, $\sqrt{Z}$ , as well as cycles containing combinations of these gates applied in parallel, all generate orbits of at most $2$ elements. 
We are now ready to define a Pauli fidelity estimation oracle:

\begin{definition}[PIE oracle]\label{def:pie}
A Pauli fidelity estimation (PIE) oracle accepts a hard cycle $H$ and a subset of Paulis $\set S \subseteq \bb{P}_n$ and returns the corresponding subset of estimated Pauli orbital fidelities, i.e.
\begin{align}
    {\rm PIE}(\set S, H)&=\{\hatfidorbit{P}{H} : P \in \set S\}~.
\end{align}
\end{definition}

We didn't write $\hatfidorbit{P}{H}_{\rm geo}$ because we consider the slight difference between 
the geometric and the arithmetic means as a negligible bias in the accuracy of the estimates. 
We briefly describe the inner workings of ${\rm PIE}(\set S, H)$ in the appendix and refer the interested reader to Ref. \cite{FW2019} for more details. We emphasize that ${\rm PIE}(\set S,H)$
%We refer to [] for more details about the inner workings of ${\rm PIE}(\set S)$, but we emphasize that it 
belongs to the family of RB-like protocols, and hence inherits a multiplicative precision on the estimates $\hat{\mu}(P)$ and a robustness to SPAM 
errors. Through classical post-processing, many Pauli fidelities can generally be estimated with recourse to relatively few quantum circuits.
% ; the cost of ${\rm{PIE}}(\set S)$ depends on the subset $\set S \subseteq \set P_n$ and is discussed in \cref{sec:cost}. 

The error probability distribution $p$ can be obtained from Pauli fidelities through
\begin{align}
   p(P) &= \frac{1}{d^2}\sum_{Q \in \bb{P}_n} \chi_Q(P) f(Q)~, \label{eq:fid2prob} \\
    \chi_P(Q) &= \begin{cases} 
    1 & P, Q \mbox{ commute} \\
    -1 & \mbox{otherwise.}
    \end{cases} \label{def:chi}
\end{align}
$\chi_Q(P)$ are irreducible characters. By combining \cref{eq:marginals,eq:fid2prob}, we can see that marginal probabilities can be obtained through Pauli fidelities through a reduced sum:
\begin{align}
    \mu(P_{\set A}) %& = \sum_{R \in \set P_{\set A^c}} p(P_{\set A} R) \notag \\
    &= \frac{1}{d^2}\sum_{\substack{R_{\set A^c} \in \set P_{\set A^c}\\ Q \in \set P_n}} \chi_Q(P_{\set A} R_{\set A^c}) f(Q) \notag \\
    &= \frac{1}{d^2}\sum_{\substack{R_{\set A^c} \in \set P_{\set A^c}\\ Q \in \set P_n}} \chi_Q(P_{\set A}) \chi_Q(R_{\set A^c}) f(Q) \tag{$\chi_P(QR)=\chi_P(Q) \chi_P(R)$} \\
    & = \frac{1}{4^{|\set A|}} \sum_{Q_{\set A} \in \set P_{\set A}} \chi_{Q_{\set A}}( P_{\set A}) f(Q_{\set A})~, \label{eq:fid2marginal}
\end{align}
where the last line is obtained from character orthogonality relations \cite{Steinberg2012}. As a simple example, 
\begin{align}
    \mu(X_2)= \frac{1}{4} \Big[ 1 +f(X_2)-f(Y_2)-f(Z_2) \Big]~.
\end{align}
If we have recourse to ${\rm PIE}(\set P_{\set A}, H)$, we may not be able to immediately use \cref{eq:fid2marginal} 
since instead of having all the individual Pauli fidelity estimates $\{\hat{f}(Q_{\set A}): Q_{\set A} \in \set P_{\set A}\}$, we have access to the orbital averages $\{\hatfidorbit{Q_{\set A}}{H}: Q_{\set A} \in \set P_{\set A}\}$. Fortunately, there is a simple way to adapt \cref{eq:fid2marginal} accordingly:
\begin{lemma}\label{lem:mufi}
    Let $\set P_{\set A}$ be invariant under the action of $H$ (that is $H \set P_{\set A} H^{-1} = \set P_{\set A}$) and let $\orbit{\set P_{\set A}}{H}$ be the set of distinct $H$-orbits
    defined in \cref{def:horbit}. The marginal distribution over $\orbit{\set P_{\set A}}{H}$ 
    can be obtained from $\{f(\orbit{Q_{\set A}}{H}):Q_{\set A} \in \set P_{\set A}\}$ through:
    \begin{align}\label{eq:mufi}
\mu(\orbit{P_{\set A}}{H}) 
        & =\frac{|\orbit{P_{\set A}}{H}|}{4^{|\set A|}}\sum_{Q_{\set A} \in \set P_{\set A}} \chi_{Q_{\set A}}(P_{\set A}) \fidorbit{Q_{\set A}}{H}~.    
    \end{align}
\end{lemma}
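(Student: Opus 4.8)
The plan is to expand the orbit marginal via its definition, $\mu(\orbit{P_{\set A}}{H}) = \sum_{Q_{\set A}\in\orbit{P_{\set A}}{H}}\mu(Q_{\set A})$, substitute the already-established relation \cref{eq:fid2marginal}, $\mu(Q_{\set A}) = 4^{-|\set A|}\sum_{R_{\set A}\in\set P_{\set A}}\chi_{R_{\set A}}(Q_{\set A})f(R_{\set A})$, and exchange the order of summation to get
\begin{align}
\mu(\orbit{P_{\set A}}{H}) = \frac{1}{4^{|\set A|}}\sum_{R_{\set A}\in\set P_{\set A}} f(R_{\set A})\sum_{Q_{\set A}\in\orbit{P_{\set A}}{H}}\chi_{R_{\set A}}(Q_{\set A})~.
\end{align}
Since $H\set P_{\set A}H^{-1}=\set P_{\set A}$, conjugation by $H$ permutes $\set P_{\set A}$ and therefore partitions it into the $H$-orbits of \cref{def:horbit}. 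I would group the outer sum over these orbits and use $\sum_{R_{\set A}\in\orbit{S_{\set A}}{H}}f(R_{\set A}) = |\orbit{S_{\set A}}{H}|\,\fidorbit{S_{\set A}}{H}$, which is just the definition of the arithmetic orbital average \cref{eq:f_orbital}, to reach a sum over orbits weighted by orbital fidelities.

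The heart of the argument is then a small combinatorial identity. The key fact is that $\chi$ is $H$-equivariant: because conjugation by a unitary preserves commutation, $\chi_{H^kR_{\set A}H^{-k}}(Q_{\set A}) = \chi_{R_{\set A}}(H^{-k}Q_{\set A}H^{k})$ for every integer $k$, where the hypothesis on $H$ guarantees $H^kR_{\set A}H^{-k}\in\set P_{\set A}$ so the characters are well defined. Combining this with the $H$-stability of each orbit and reindexing $Q_{\set A}\mapsto H^{-k}Q_{\set A}H^{k}$ shows that $\sum_{Q_{\set A}\in\orbit{P_{\set A}}{H}}\chi_{R_{\set A}}(Q_{\set A})$ depends only on the orbit of $R_{\set A}$, and symmetrically that $\sum_{S_{\set A}\in O}\chi_{S_{\set A}}(Q_{\set A})$ depends only on the orbit of $Q_{\set A}$, for any fixed orbit $O$. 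Evaluating the double sum $\sum_{(Q_{\set A},S_{\set A})\in\orbit{P_{\set A}}{H}\times O}\chi_{S_{\set A}}(Q_{\set A})$ in the two possible orders then yields
\begin{align}
|O|\sum_{Q_{\set A}\in\orbit{P_{\set A}}{H}}\chi_{R_{\set A}}(Q_{\set A}) = |\orbit{P_{\set A}}{H}|\sum_{S_{\set A}\in O}\chi_{S_{\set A}}(P_{\set A})
\end{align}
for any representative $R_{\set A}\in O$.

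Substituting this identity into the orbit-grouped expression converts the factor $|O|$ produced by $\sum_{R_{\set A}\in O}f(R_{\set A}) = |O|\,\fidorbit{O}{H}$ into the prefactor $|\orbit{P_{\set A}}{H}|$, leaving $\sum_{O}\bigl(\sum_{S_{\set A}\in O}\chi_{S_{\set A}}(P_{\set A})\bigr)\,\fidorbit{O}{H}$. Since $\fidorbit{\cdot}{H}$ is constant on each orbit, reassembling the partition of $\set P_{\set A}$ collapses this back into the single sum $\sum_{Q_{\set A}\in\set P_{\set A}}\chi_{Q_{\set A}}(P_{\set A})\,\fidorbit{Q_{\set A}}{H}$, which is precisely \cref{eq:mufi}.

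I expect the only genuine obstacle to be bookkeeping: making the orbit-partition regrouping and the Fubini double-count airtight, in particular verifying that the reindexing maps are honest bijections of the orbits and that the $H$-equivariance of $\chi$ is justified under the sole hypothesis $H\set P_{\set A}H^{-1}=\set P_{\set A}$. It is worth sanity-checking the prefactors on a small example such as $H={\rm CZ}$ on a qubit pair, where the non-trivial orbits have size two; and I note that no geometric-versus-arithmetic subtlety enters here, since \cref{lem:mufi} is stated directly for the arithmetic orbital average of \cref{eq:f_orbital}.
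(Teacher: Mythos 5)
Your proposal is correct, and it reaches \cref{eq:mufi} by a genuinely different organization of the same ingredients. The paper's proof parametrizes the averaging by group elements: it writes $\mu(\orbit{P_{\set A}}{H})/|\orbit{P_{\set A}}{H}|$ as the average of $\mu(H^jP_{\set A}H^{-j})$ over $j=0,\dots,c-1$, where $c$ is the smallest power with $H^c\in\set P_n$, applies \cref{eq:fid2marginal} to each term, reindexes $Q_{\set A}\mapsto H^jQ_{\set A}H^{-j}$ using $H^j\set P_{\set A}H^{-j}=\set P_{\set A}$ and the conjugation-invariance of $\chi$, and then recognizes $\frac{1}{c}\sum_j f(H^jQ_{\set A}H^{-j})$ as $\fidorbit{Q_{\set A}}{H}$ — the possible mismatch between the orbit sizes of $P_{\set A}$ and $Q_{\set A}$ is absorbed automatically because each orbit element appears with multiplicity $c/|\orbit{Q_{\set A}}{H}|$ in the sum over $j$. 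You instead parametrize by orbit elements: expand $\mu(\orbit{P_{\set A}}{H})$ as a sum over the orbit, swap sums, and handle the orbit-size mismatch explicitly through the double-counting identity $|O|\sum_{Q_{\set A}\in\orbit{P_{\set A}}{H}}\chi_{R_{\set A}}(Q_{\set A}) = |\orbit{P_{\set A}}{H}|\sum_{S_{\set A}\in O}\chi_{S_{\set A}}(P_{\set A})$, which I have checked and which follows correctly from the equivariance $\chi_{H^kR H^{-k}}(Q)=\chi_{R}(H^{-k}QH^{k})$ together with the $H$-stability of each orbit. What your route buys is transparency about where the prefactor $|\orbit{P_{\set A}}{H}|$ comes from (it is exactly the symmetry of the character pairing between two orbits), and it never needs the cyclicity $c$ or the fact that some power of $H$ is a Pauli — only that conjugation by $H$ permutes the finite set $\set P_{\set A}$ — which aligns with the paper's closing remark that $c$ could be replaced by the orbit size in a less efficient version. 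What the paper's route buys is brevity: a single chain of equalities with no auxiliary Fubini step and no need to verify that the inner character sums are constant on orbits. Your flagged bookkeeping points (bijectivity of the reindexing maps on orbits, validity of equivariance under the sole hypothesis $H\set P_{\set A}H^{-1}=\set P_{\set A}$) are indeed the only things to nail down, and they do go through, since the restriction of conjugation by $H$ to a finite orbit is a cyclic permutation whose inverse is again a positive power.
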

\begin{proof}
   Let $c$ be the smallest positive integer such that $H^{c} \in \set P_n$. Then,
    \begin{align}
        \frac{\mu(\orbit{P_{\set A}}{H})}{|\orbit{P_{\set A}}{H}|} & = \sum_{j=0}^{c-1} \frac{\mu(H^jP_{\set A}H^{-j})}{c} \notag \\
        &=\frac{1}{4^{|\set A|}c} \sum_{j=0}^{c-1} \sum_{Q_{\set A} \in \set P_{\set A}} \chi_{Q_{\set A}}(H^{j} P_{\set A} H^{-j}) f(Q_{\set A})  \tag{\cref{eq:fid2marginal}} \\
        & = \frac{1}{4^{|\set A|}c} \sum_{j=0}^{c-1} \sum_{Q_{\set A} \in \set P_{\set A}} \chi_{H^{j} Q_{\set A} H^{-j}}(H^{j} P_{\set A} H^{-j}) f(H^{j} Q_{\set A} H^{-j})
 \tag{$H^{j} \set P_{\set A} H^{-j} = \set P_{\set A}$}    \\
 & =  \frac{1}{4^{|\set A|}}\sum_{Q_{\set A} \in \set P_{\set A}} \chi_{Q_{\set A}}(P_{\set A})   \sum_{j=0}^{c-1} \frac{f(H^{j} Q_{\set A} H^{-j})}{c} \tag{$\chi_{H^{j} Q H^{-j}}(H^{j} P_{\set A} H^{-j}) = \chi_Q(P_{\set A})$} \\
  &= \frac{1}{4^{|\set A|}} \sum_{Q_{\set A} \in \set P_{\set A}} \chi_{Q_{\set A}}(P_{\set A}) \fidorbit{Q_{\set A}}{H}, \tag{\cref{eq:f_orbital}}
 \end{align}
{where we can drop terms with $j\geq c$ in the first line because conjugating by a Pauli returns the original Pauli and going past $j=c-1$ causes terms to repeat.
Equivalently, for a less efficient version of this calculation, we could replace $c$ with $|\orbit{P_{\set A}}{H}|$.}
\end{proof}
{In this section we have shown how to convert the average fidelities $\{\fidorbit{Q_{\set A}}{H}: Q_{\set A} \in \set P_{\set A}\}$ returned by PIE into average marginal probabilities $\{\mu(\orbit{Q_{\set A}}{H}): Q_{\set A} \in \set P_{\set A}\}$. Our estimation procedure inherits the robustness to SPAM errors that characterises PIE and RB-like protocols in general. This is a crucial advantage, as it guarantees the effectiveness of our method in the presence of arbitrary (and potentially unknown) classes of SPAM errors. In practice, some information about SPAM errors is often available \cite{Lin_2021, \learnability}. 
% For example, in models for many of the available platforms, a common assumption is that state-preparation errors are negligible compared to readout errors or vice-versa \cite{Chen2022}.
For this reason, it is natural to ask whether this information may be used to enhance our estimation procedure. In the appendix we give a positive answer to this question. Specifically, we show that by incorporating the information about SPAM errors into PIE, it is possible to estimate the individual fidelities $\{f({Q_{\set A}}): Q_{\set A} \in \set P_{\set A}\}$ and hence the individual marginal probabilities $\{\mu({Q_{\set A}}): Q_{\set A} \in \set P_{\set A}\}$ through eq. \ref{eq:fid2marginal}.} We emphasize however that our method for resolving the orbits has additive precision rather than multiplicative, and is therefore more vulnerable to shot noise.

\subsection{Cycle Error Reconstruction}

Typical cycles of interest are composed of a parallel round of gates. Let $H= G_0  \cdots   G_{s-1}$ be a hard cycle composed of $s$ parallel gates $G_i$ with respective support $\set A_i$.  
For instance, given a 5-qubit system, $H$ may be comprised of two parallel CZ gates on the pairs $\set A_0=\{0,1\}$, $\set A_1=\{3,4\}$, and a Hadamard gate on the remaining qubit $\set A_2=\{2\}$. The main interest in defining the parallel gate supports $\set A_i$ of a cycle $H$ is that for any error $Q_{\set A_i} \in \set P_{\set A_i}$, then $H Q_{\set A_i}H^{-1}$ remains in  $\set P_{\set A_i}$.

Many important noise mechanisms such as dephasing, amplitude damping, and unitary over-rotation are fully depicted within the support of a given gate. Hence, the marginal distributions associated with individual gate supports are enough to capture dominant error sources in the absence of crosstalk effects.
In turn, crosstalk effects act on a small number of qubits, which implies that marginal distributions associated with single unions of gate supports (i.e. $\set A_i \cup \set A_j$) are often sufficient to reconstruct the most significant part of the error distribution $p$.

In general, one may reconstruct the marginal distributions for all the unions of $k$ distinct parallel gate supports:
\begin{protocol}[Cycle Error Reconstruction] \label{pro:cer}
\begin{itemize}
    \item[]
    \item[]  {\bf Input}: A hard cycle $H$; $k$, the number of parallel gate supports to consider in the marginals (default $k=2$).
    \item[] {\bf Output}: Error marginal distributions for all the unions of $k$ distinct parallel gate supports.
    \item[]  {\bf Steps}:
Let $H$ have parallel gate supports $\set A_0, \cdots, \set A_{s-1}$. Let $\set U_k$ be the set of all $s \choose k$ possible unions of $k$ distinct gate supports. 
\begin{itemize}
    \item[] For $\set A \in \set U_k$:
        \begin{enumerate}
            \item Call ${\rm PIE}(\set P_{\set A}, H)$ and store the estimated Pauli fidelity sums $\{\hatfidorbit{Q_{\set A}}{H} : Q_{\set A} \in \set P_{\set A}\}$.\\
            \item Use \cref{lem:mufi} to recover the marginal probabilities $\mu(\orbit{Q_{\set A}}{H})$ over the set $\orbit{\set P_{\set A}}{H}$.
        \end{enumerate}
\end{itemize}
\end{itemize}
\end{protocol}

\subsection{Reading Cycle Error Reconstruction Heatmaps}\label{sec:read}

\begin{figure}[h]
\centering
\includegraphics[width=15cm]{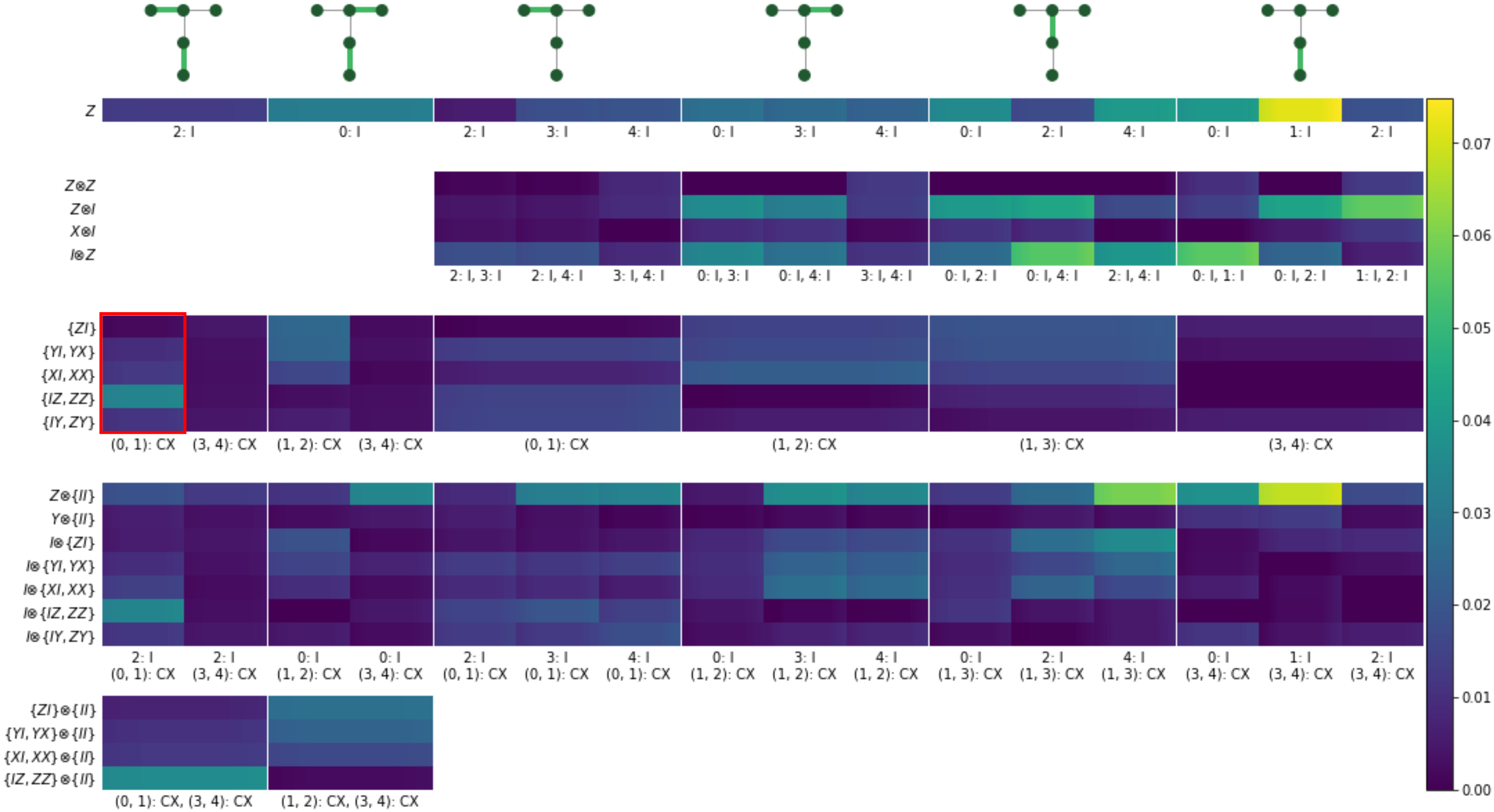}
\caption{Cycle Error Reconstruction ($k \in \{1,2\}$) heatmap from the \texttt{ibmq\_burlington} chip in 2020. Each $T$-shaped graph indicates a different hard cycle; the bold green lines indicate the support of the entangling gates. 
\Cref{sec:read} provides the explanations for interpreting this heatmap. The sub-column enclosed in red serves as an example in the text.} \label{fig:Burlington}
\end{figure}

As shown in \cref{fig:Burlington,fig:Essex,fig:Ourense,fig:Vigo}, we implemented cycle error reconstruction (\cref{pro:cer}) over various
$T$-shaped 5-qubit systems for $k \in \{1,2\}$.
For each system, we reconstructed some marginal error probability distributions for $6$ distinct hard cycles $H_0, \cdots, H_5$.
Given a $T$-shaped system, the qubit labelling starts from $0$ at the top left and increases from left to right and then top to bottom. The hard cycles are composed of identity and CX gates performed in parallel; the support of the entanglers is depicted as a bold green line in the $T$-shaped graphs. We use ``$(i,j): {\rm CX}$'' to denote a CX gate with control $i$ and target $j$.

To properly read the cycle error reconstruction heatmaps (\cref{fig:Burlington,fig:Essex,fig:Ourense,fig:Vigo}), first notice that 
they are partitioned in blocks $B_{i,j}$ (some blocks are left empty) 
together forming $5$ rows and $6$ columns  ($i \in \{0, \cdots ,4\}$, $j \in \{0, \cdots, 5\}$). Each of the $6$ columns $B_{\star, j}$ of our heatmaps contains marginal probabilities associated with a fixed hard cycle $H_j$. $H_j$ can be inferred from the sub-column labels. For example, the hard cycle $H_0$ corresponds to $\{(0,1):{\rm CX}; ~2: {\rm I} ; ~(3,4):{\rm CX}\}$.

Each (non-empty) block $B_{i,j}$ contains between $1$ and $3$ sub-columns, 
each representing a specific marginal 
distribution $\mu_j$ over a set of Paulis $\orbit{\set P_{\set A}}{H_j}$. The qubit support $\set A$ can be identified
via the label located below the sub-column. For instance, if we look below the first sub-column of the block
$B_{2,0}$ (identified in red in \cref{fig:Burlington}), we see ``$(0,1):~{\rm CX}$'', which implies that $\set A=\{0,1\}$
and that the hard cycle $H_0$ acts as $\rm CX$ on those qubits.

The row labels corresponds to different Pauli errors but unspecified qubit support $\set A$.
For instance, consider once again the first sub-column of the block
$B_{2,0}$ (identified in red in \cref{fig:Burlington}). Within that sub-column, the cell for which 
the associated row label is $\{IZ, ZZ\}$ (second from the bottom) indicates the value of 
the marginal probability $\mu_0(IZ_{\{0,1\}})+\mu_0(ZZ_{\{0,1\}})=:\mu_0(\orbit{ZZ_{\{0,1\}}}{H_0})$. 
To improve the readability of the heatmap, when 
some marginal probabilities $\mu_j(\orbit{P_{\set 
A}}{H_j})$ are below some threshold for all the $\set A$s and $H_j$s under consideration, 
their associated row doesn't appear. For example, the probability $\mu_0(IZ_{\{0,1\}})+\mu_0(ZZ_{\{0,1\}})$ is significant, hence the entire row ``$\{IZ, ZZ\}$'' is displayed, and includes the relatively negligible value $\mu_5(IZ_{\{3,4\}})+\mu_5(ZZ_{\{3,4\}})$.

A look at \cref{fig:Burlington,fig:Essex,fig:Ourense,fig:Vigo} reveals a noteworthy similarity: some entangling gates perform particularly well on their respective qubit support, but important crosstalk errors -- which manifest as weight-1 Z errors -- occur on neighbouring idle qubits. In the next section, we introduce a calibration protocol (\cref{proto:sc})
which successfully suppress these dominant errors, as shown in \cref{fig:cer_sc}.

\section{Calibrating Hard Cycles}\label{sec:sc}
In the previous sections, we portrayed hard cycles as monolithic operations, while
in reality they are functions of many tunable parameters. In mathematical terms, 
we may formulate this statement by adding a variable to the noisy representation 
of hard cycles: $\nrep{H, \params}$. Here the vector $\params = (\theta_0, \cdots, \theta_{m-1})$ contains $m$ real parameters which may include information regarding pulse durations and shapes,
but also regarding parameterized gates (such as virtual Zs) meant to act as compensations for coherent error mechanisms.

\subsection{Stochastic Calibration}
Given these additional details, a natural endeavour is to find a choice of parameters that minimizes the probability of error $\sum_{P \neq I} p(P|\params)$. Notice that we shouldn't expect 
the probability of error to reach zero for some parameter value: certain error mechanisms such as
depolarization and dephasing simply cannot be suppressed by calibration procedures. 
More generally, error probabilities can be split in a sum of calibration-sensitive and calibration-insensitive terms,
\begin{align}
p(P|\params)=\pcal{P}+\pncal{P}~,
\end{align}
where
\begin{align}
\pncal{P}:=\min_{\params\in \cali} p(P|\params)~
\end{align}
is the error component that is not sensitive to variations in $\params$. By such definition, calibrating the cycle $H$ reduces to minimizing the sum of calibration-sensitive terms
\begin{align}\label{eq:sum_cal}
    \sum_{P \neq I} \pcal{P}~.
\end{align}
The nuance here is that the sum \ref{eq:sum_cal} may contain fewer non-trivial terms than $\sum_{P \neq I} p(P|\params)$.
Indeed, calibration routines can only address coherent noise sources (that is, coherent \emph{before} their 
stochastic averaging through randomized compiling), and those are usually resulting from few 
processes such as crosstalk and gate over-rotation. For instance, as shown through the Cycle Error 
Reconstruction data from last section (see \cref{fig:Burlington}), crosstalk -- which appears to be the main source of errors --  manifests itself as weight-1
$Z$ errors. As such, we may expect calibration-sensitive errors to be dominated by the following reduced sum
\begin{align}\label{eq:ex_sum}
    \sum_{i=0}^4 \pcal{Z_i}~.
\end{align}
Still carrying with our example, if we neglect calibration-sensitive errors other than weight-1 Z 
errors, it follows from \cref{eq:fid2prob} that the Pauli fidelity 
$f(XXXXX_{\{0,1,2,3,4\}}| \params)$ is maximized if and only if the sum \ref{eq:ex_sum} is
minimized. That is, only one Pauli fidelity estimate may be used as an objective function for 
calibration purposes. More generally, consider the following calibration strategy:
\begin{protocol}[Stochastic Calibration] \label{proto:sc}
    \begin{itemize}
        \item[]
        \item[] {\bf Input}: A hard cycle $H$; a finite set $\Theta =\{\params^{(0)}, \cdots,\params^{(|\Theta|-1)}\}$ of parameter combinations $\params^{(i)} \in \set R^m$; a set $\set S$ of Paulis that defines the following objective function 
        \begin{align}\label{eq:of}
            \mc{O}_{\set S}(\params):=\sum_{Q \in \set S} {f}(\orbit{Q}{H}| \params)~.
        \end{align}
        \item[] {\bf Output}: Calibrated parameters $\params^*=(\theta_0^*, \cdots, \theta_{m-1}^*)$.
        \item[] {\bf Steps}: 
            \begin{itemize}
                \item[] For $\params^{(i)} \in \Theta$:
                    \begin{itemize}
                        \item[1.] Run ${\rm PIE}(\set S, H)$ to get the estimates 
                        $\hat{f}(\orbit{Q}{H}| \params^{(i)}),~ \forall Q \in \set S$.
                        \item[2.] Using \cref{eq:of}, estimate the objective function $\mc{O}_{\set S}$ at $\params^{(i)}$. 
                    \end{itemize}
                \item[] {The optimal parameters $\params^*$ generally won't lie in the discrete probing set $\Theta$. To explore values outside of $\Theta$, one could use gradient-free optimization methods such as Bayesian optimization over some error model family. In this work, as a proof-of-concept, we estimated the objective function over a continuous domain $\mc D$ by performing a curve fitting routine based on the partially reconstructed objective function. We returned $\params^*$ as the estimate of $\argmax_{\params \in \mc D} \mc{O}_{\set S}(\params)$.}
            \end{itemize}
    \end{itemize}
\end{protocol}

\Cref{proto:sc} revolves around constructing and estimating the objective function appearing in \cref{eq:of}. To get a better understanding of the objective function, consider the dual of \cref{eq:fid2prob},
\begin{align}
    f(Q) = \sum_{R \in \set P_n} \chi_{Q}(R) p(R)~,
\end{align}
which may be massaged into
\begin{align}\label{eq:prob2fid}
    f(Q)=1-2 \sum_{\substack{R \in \set P_n \\ \chi_Q(R)=-1}} p(R)~.
\end{align}
By substituting \cref{eq:prob2fid} into \cref{eq:of}, it follows that the calibration is sensitive to all Pauli errors that anti-commute with at least one element of $\orbit{\set S}{H}$. For instance (let's drop the qubit support indices for simplicity), if $\orbit{\set S}{H}=\{YI,IY\}$, 
then the objective function would be sensitive to 
errors in
\begin{align}
    \underbrace{\{XI,ZI,XX,ZX,XY,ZY,XZ,ZZ\}}_{\text{Anti-commutes with }YI} \cup \underbrace{\{IX,IZ,XX,XZ,YX,YZ,ZX,ZZ\}}_{\text{Anti-commutes with }IY}~.
\end{align}

In the formulation of \cref{proto:sc}, we purposely left 
$\set S$ as an unspecified input parameter for the sake of generality.
To make a suitable choice of $\set S$ we may first establish a set of
errors $\set S^{\rm cal}$ that we want our calibration protocol to be sensitive to.
As mentioned earlier, only coherent error processes are subject to calibration-induced suppression. Hence, a knowledge of 
those processes -- based on physical models and/or on Cycle Error Reconstruction evidences -- 
may strongly filter the list of candidates for calibration-sensitive errors.

Once $\set S^{\rm cal}$ is determined, we want to find a set $\set S$ such that 
$\forall Q \in \set S^{\rm cal}$, $\exists R \in \orbit{\set S}{H}$ such that $Q$ and $R$ anti-commute.
This condition usually still leaves many alternatives for $\set S$.
To narrow down those options, we may further take into account the fact that different errors 
$Q \in \set S^{\rm cal}$ are affected by different tunable parameters.
In such case, it is sometimes possible to make a choice of $\set S$ that simplifies the 
objective function to a sum of terms with distinct parameter dependencies:
\begin{align}
    \mc O_{\set S}(\params) = \sum_{i=1}^s \mc O_{\set S_i} ({\bm \alpha}_i)~,
\end{align}
where $\set S_1, \cdots, \set S_s$ partition $\set S$, and where the vector of parameters $\params=\bm{\alpha}_1 \times \cdots \times \bm{\alpha}_s$ is re-expressed as a Cartesian product of $\bm \alpha_i$ vectors.
For example, let $\set S^{\rm cal}=\{X_{0},X_1\}$ and $\params = (\theta_0,\theta_1)$ where the 
tunable parameter $\theta_i$ only affects qubit $i$. The choice $\set S=\{Z_0,Z_1\}$ (let $\set S=\orbit{\set S}{H}$)
yields
\begin{align}
    \mc{O}_{\set S}(\params) = \mc{O}_{Z_0}(\theta_0) + \mc{O}_{Z_1}(\theta_1)~,
\end{align}
whereas the choice $\set S=\{ZZ_{\{0,1\}}\}$ also yields an objective function that is
sensitive to errors in $\set S^{\rm cal}=\{X_{0},X_1\}$, but that can't be re-expressed as a sum
of individual $\theta_i$-dependent terms. The subdivision of the objective function into
many independent parts is often preferable since it reduces the number $|\Theta|$ of parameter combinations
necessary for estimating $\params^*$. 

% \arnaud{Another consideration in choosing $\set S$ is the number of local measurement bases necessary for measuring observables in $\orbit{\set S}{H}$.}

\subsection{Experimental Results: Stochastic Calibration over Virtual Phase Compensations}

\begin{figure}[h]
\includegraphics[width=8cm]{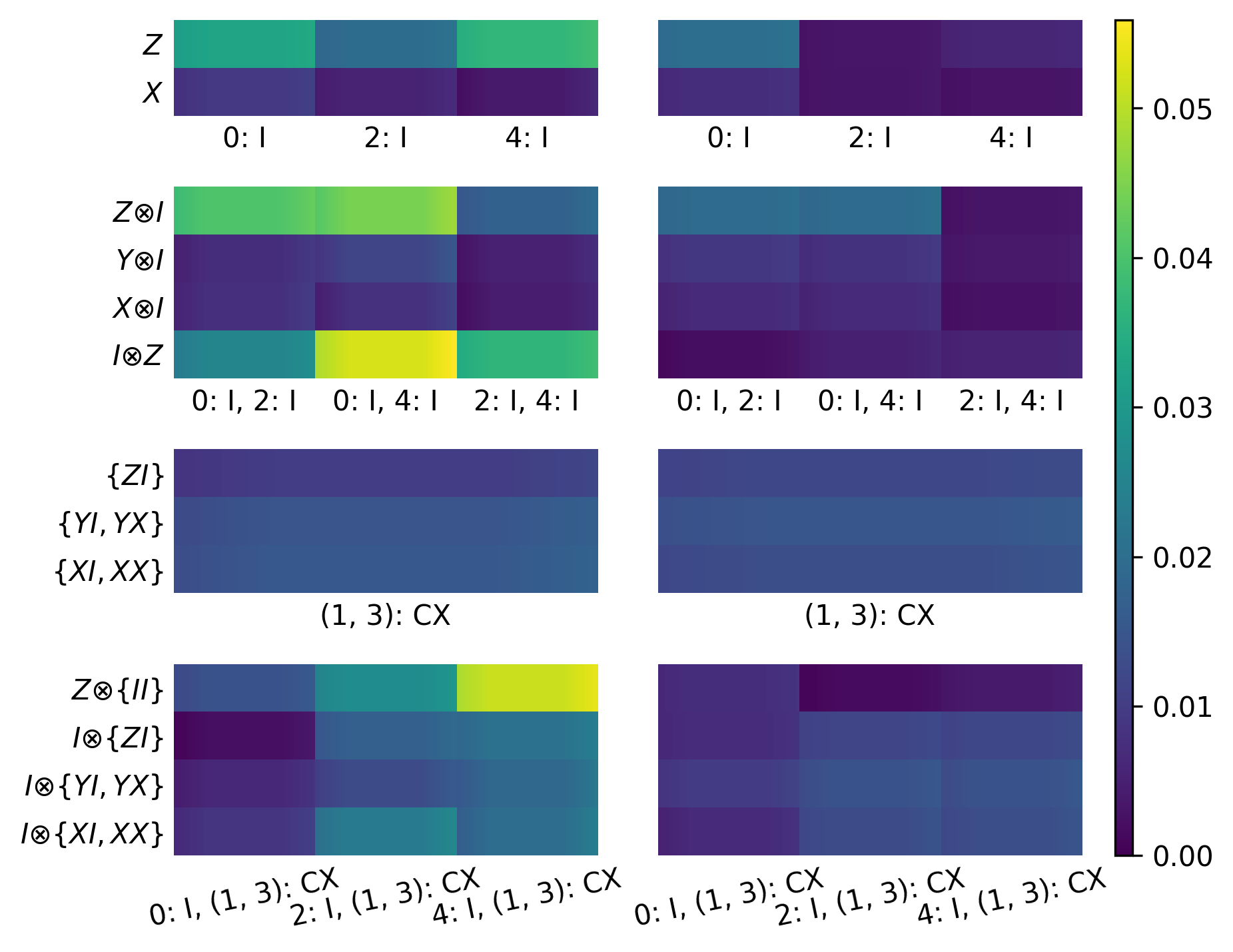}
\caption{Cycle Error Reconstruction heatmaps before (left) and after (right) Stochastic Calibration on the \texttt{ibmq\_burlington} chip (2020). The hard cycle $H$ corresponded to $\{(0,2,4):{\rm I };~ (1,3):{\rm CX}\}$, and the set $\set S$ was chosen to be $\{X_0,X_2,X_4\}$. The tunable parameters corresponded to virtual phase compensations on spectator qubits $\{0,2,4\}$. The objective function was expressed as a sum of three independent terms (see \cref{eq:of_ibm}), which where individually fitted to quadratic curves. The calibrated parameters $\params^*=(-4.8(2)^\circ,-15.1(4)^\circ, -20.8(1)^\circ)$ were obtained by finding the maxima on the three quadratic curves. The heatmap on the right corresponds to the
reconstructed error profile of $\nrep{H, \params^*}$.
}\label{fig:cer_sc}
\end{figure}

We demonstrated the practical effectiveness of \cref{proto:sc} by running it on the \texttt{ibmq\_burlington} chip with a hard cycle $H$ corresponding to $\{(0,2,4):{\rm I };~ (1,3):{\rm CX}\}$. It was apparent from Cycle
Error Reconstruction data -- see the ``before'' part of \cref{fig:cer_sc} -- that the hard cycle was well calibrated on
the pair $\{1,3\}$ but that important crosstalk errors -- which manifested as weight-1 $Z$ errors on the idle qubits $0$, $2$ and $4$ -- still remained. From this observation, we limited tunable parameters to virtual phase compensations on the idle qubits: $\params=(\theta_{\{0\}},\theta_{\{2\}},\theta_{\{4\}})$. Since virtual operations are nearly perfect, 
we assumed that the variation over $\params$ wouldn't introduce other significant types of errors, and we hence established that the set of calibration-sensitive errors was well approximated by $\set S^{\rm cal}=\{Z_{0},Z_2,Z_4\}$. Invoking the locality of virtual phase compensations, we further assumed that error probabilities obeyed
\begin{align}
    p(Z_i|\params)= p(Z_i|\theta_{\{i\}})~.
\end{align}
This assumption motivated the choice $\set S=\{X_0,X_2,X_4\}$, which simplifies the objective function to
a sum of terms with unshared parameter dependencies:
\begin{align}
    \mc O_{\set S}(\params) &= \mc O_{X_0}(\theta_{\{0\}}) + \mc O_{X_2}(\theta_{\{2\}}) +\mc O_{X_4}(\theta_{\{4\}}) \notag \\
    & = f(X_0|\theta_{\{0\}}) + f(X_2|\theta_{\{2\}})+f(X_4|\theta_{\{4\}})~. \label{eq:of_ibm}
\end{align}
Due to \cref{eq:of_ibm}, we could limit $\Theta$ to a set of only $9$ parameter combinations
\begin{align}\label{eq:param_combinations}
    \params^{(j)} = (1-j) \cdot (5^\circ,5^\circ,5^\circ)~,
\end{align}
with $j \in \{0, \cdots, 8\}$. As depicted in \cref{fig:sc_of}, we used the $\rm PIE$ oracle 
to get estimates $\hat{f}(X_{i}| \theta_{\{i\}})$ for qubits $i \in \{0,2,4\}$ and angles 
$\theta_{\{i\}} \in \{-35^\circ, -30^\circ, \cdots , 0^\circ, 5^\circ\} $. We then fitted each data sets
$\{\hat{f}(X_{i}| -35^\circ), \cdots, \hat{f}(X_{i}| 5^\circ) \}$ to a degree-$2$ polynomial via the weighted least squares method. From these fits, we evaluated the maximum of the locally-approximated objective function, 
and found the calibrated parameters $\theta_{\{0\}}^*=-4.8(2)^\circ$, $\theta_{\{2\}}^*=-15.1(4)^\circ$, $\theta_{\{4\}}^*=-20.8(1)^\circ$. Finally, we performed Cycle Error Reconstruction 
on the calibrated cycle $\nrep{H, \params^*}$. The resulting cycle, as shown on the 
``after'' part of \cref{fig:cer_sc}, has much lower error rates on the spectator qubits.

\begin{figure}[h]
\includegraphics[width=15cm]{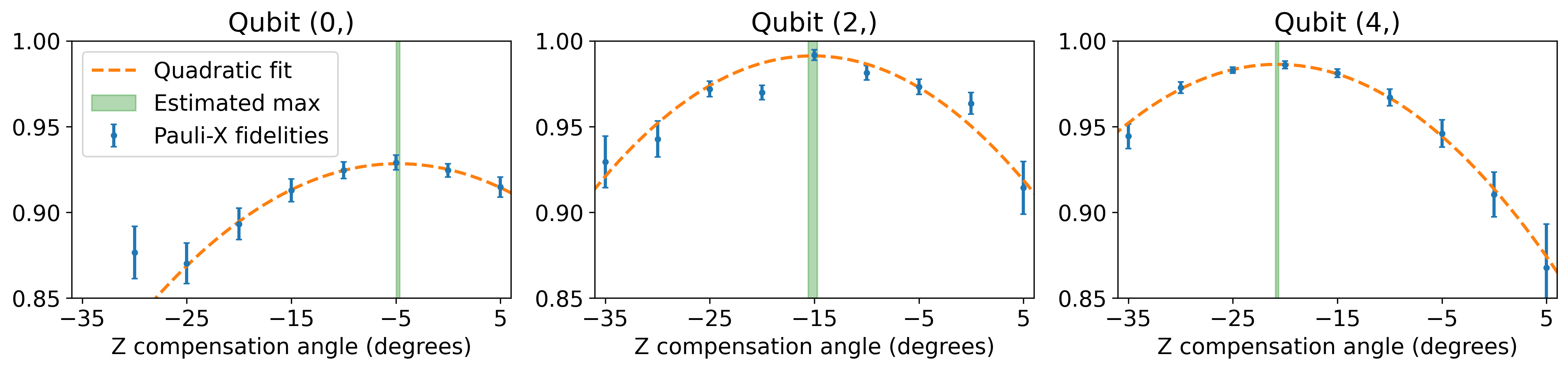}
\caption{Objective function (see \cref{eq:of_ibm}) from Stochastic Calibration on the \texttt{ibmq\_burlington} chip (2020). The hard cycle $H$ corresponded to $\{(0,2,4):{\rm I };~ (1,3):{\rm CX}\}$, and the set $\set S$ was chosen to be $\{X_0,X_2,X_4\}$. The tunable parameters corresponded to virtual phase compensations on spectator qubits $\{0,2,4\}$. The objective function was expressed as a sum of three independent terms (see \cref{eq:of_ibm}), which where individually fitted to quadratic curves based on $9$ estimates (see \cref{eq:param_combinations}). The calibrated parameters $\params^*=(-4.8(2)^\circ,-15.1(4)^\circ, -20.8(1)^\circ)$ were obtained by finding the maxima on the three quadratic curves.}\label{fig:sc_of}
\end{figure}

\section{Conclusion}

In this paper, we presented two protocols, CER and SC, which build upon decades of research into benchmarking and mitigation of errors in quantum systems.
We further demonstrated that randomized compilation (RC), in addition to its traditional use for noise tailoring, has the important benefit of facilitating the learning and calibration procedures for cycle errors.
The protocols presented in this paper, CER and SC, are included in the True-Q software along with RC and other noise characterization, mitigation, and suppression protocols \cite{trueq}.
We began in \cref{sec:RC} by showing that the average RC circuit can be recast as a sequence of \emph{effective dressed cycles} (see \cref{lem:eff}) -- which can be thought of as the circuit primitives -- and that the error profile associated with effective dressed cycles is almost purely stochastic even in the advent of gate-dependent errors (see \cref{lem:pauli_channel}).

Then, in \cref{sec:cer}, we leveraged the structure of CB circuits to introduce \emph{Cycle Error Reconstruction} (CER), a highly scalable protocol that provides, with multiplicative precision, marginal error probability estimates of any effective dressed cycle of interest.
We performed CER for a variety of cycles on multiple IBM-Q devices (see \cref{fig:Burlington,fig:Essex,fig:Ourense,fig:Vigo}).

Finally, in \cref{sec:sc}, we introduced a fast, scalable and efficient CB-structured calibration technique 
-- referred to as \emph{Stochastic Calibration} (SC) -- that provides major improvements 
on the ``\emph{Optimized Randomized Benchmarking for Immediate Tune-up}'' (ORBIT) calibration scheme. We performed
SC on the \texttt{ibmq\_burlington} chip, and added compilation-based phase compensations for 
crosstalk-induced Z rotations, which lead to a $5$-fold error probability reduction of targeted errors (see \cref{fig:cer_sc}).
The protocols introduced in this paper form a solid base from which we can continue to develop strategies to empower quantum computing as we move through the NISQ era.
We have demonstrated that these protocols can enable significant improvements in performance for quantum computing devices.
Further, our noise characterization protocol is scalable and provides detailed information about the error model in a system, providing a state-of-the-art characterization method.

Directions for future work include improvements in scalability and developing methods for extracting more detailed noise profiles at similar cost, including resolving degeneracies and characterizing coherent errors.
Some researchers have begun exploring these areas, however there remains plenty of future work \cite{Chen2022, Huang2022,carignandugas2023estimating}.
The information retrieved from CER can be used to inform mitigation protocols (and has been in e.g. refs. \cite{Ferracin2022, Berg2022}).
Future applications of CER could include the selection of error correcting codes and recovery operations, and characterization and mitigation protocols are already being explored for the quantum error correction regime \cite{\qeclit}.

\emph{Acknowledgements}-- We acknowledge the use of IBM Quantum services for this work. The views expressed are those of the authors, and do not reflect the official policy or position of IBM or the IBM Quantum team. This research was supported by the Government of Canada, the U.S. Army Research Office through grant W911NF-21-1-0007, Transformative Quantum Technologies, and Quantum Benchmark Inc.

\bibliographystyle{unsrtnat}
\bibliography{qcvv}

\appendix

\section{CER heatmaps}

\begin{figure}[!h]
\includegraphics[width=15cm]{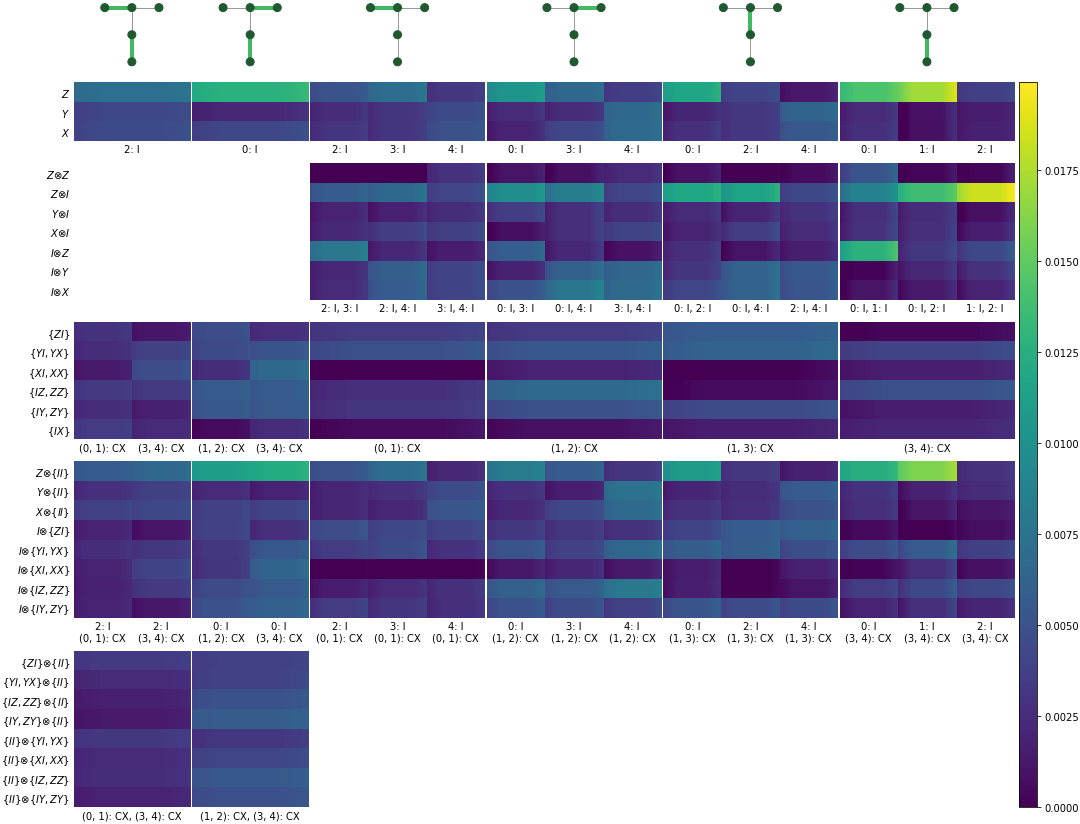}
\caption{Cycle Error Reconstruction ($k \in \{1,2\}$) heatmap from the \texttt{ibmq\_essex} chip. Each $T$-shaped graph indicates a different hard cycle; the bold green lines indicate the support of the entangling gates. \Cref{sec:read} provides the explanations for interpreting this heatmap.} \label{fig:Essex}
\end{figure}
\clearpage

\begin{figure}[!h]
\includegraphics[width=15cm]{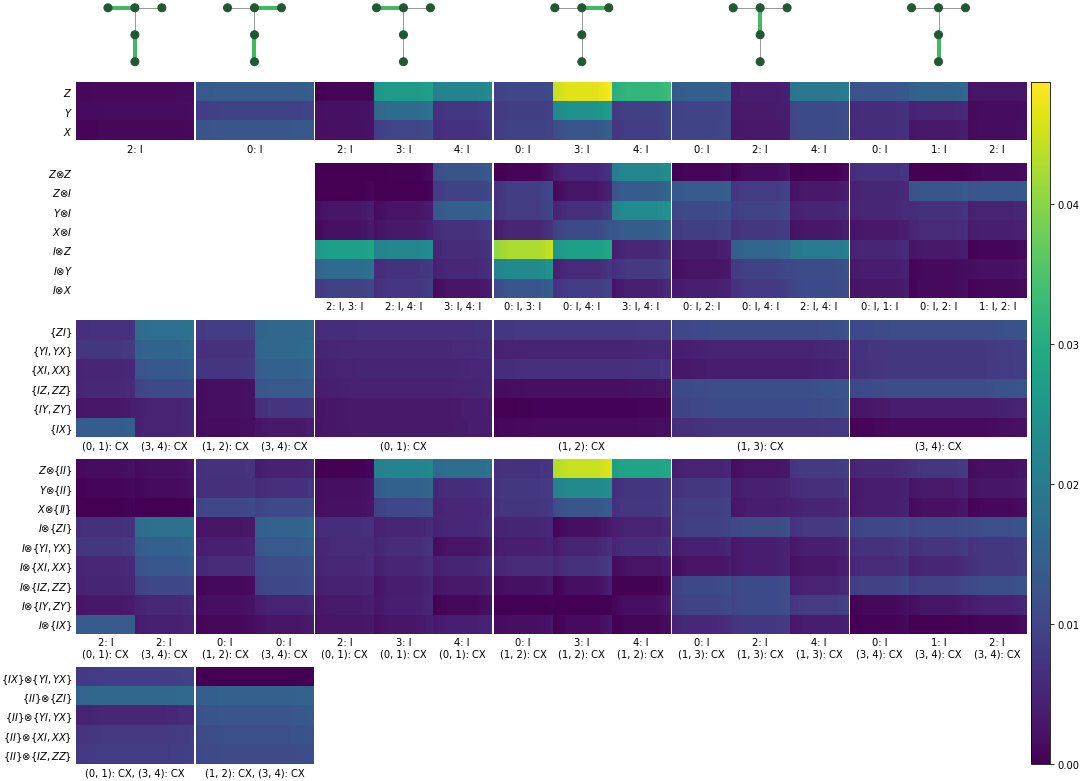}
\caption{Cycle Error Reconstruction ($k \in \{1,2\}$) heatmap from the \texttt{ibmq\_ourense} chip. Each $T$-shaped graph indicates a different hard cycle; the bold green lines indicate the support of the entangling gates. \Cref{sec:read} provides the explanations for interpreting this heatmap.} \label{fig:Ourense}
\end{figure}
\clearpage

\begin{figure}[!h]
\includegraphics[width=15cm]{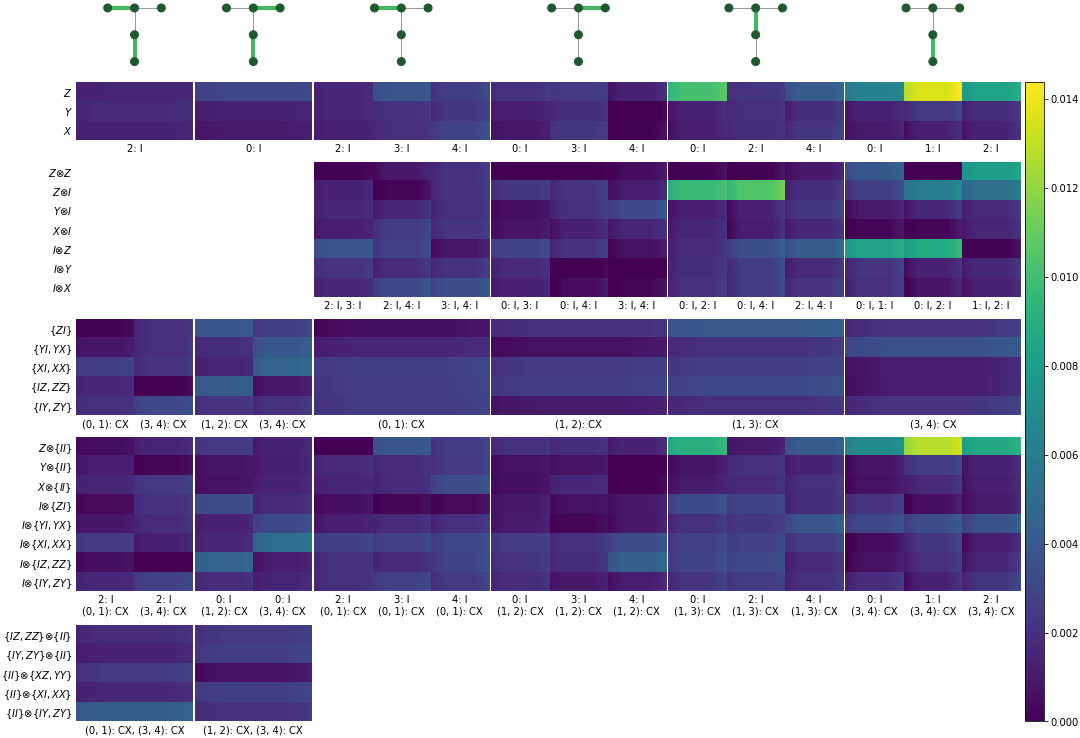}
\caption{Cycle Error Reconstruction ($k \in \{1,2\}$) heatmap from the \texttt{ibmq\_vigo} chip. Each $T$-shaped graph indicates a different hard cycle; the bold green lines indicate the support of the entangling gates. \Cref{sec:read} provides the explanations for interpreting this heatmap.}\label{fig:Vigo}
\end{figure}

\section{Randomized compiling and effective dressed cycles: proof of \cref{lem:eff}}\label{app:rc}

In this section we shall demonstrate \cref{lem:eff}, which states that the average RC circuit can be closely approximated
by a sequence of effective dressed cycles.
For conciseness, for any set of indexed variables $x_a, x_{a-1}, \ldots, x_b$ that can be multiplied (e.g., square matrices), we use the shorthand
\begin{align}
    x_{a:b} = 
    \begin{cases} 
    x_a x_{a-1} \ldots x_b & a \geq b \\
    1 & \mbox{otherwise}
    \end{cases}
\end{align}
to denote the ordered product of the variables. For conciseness, we will 
replace $\adj(H_i ,E_i|T_i, T_{i-1} )$ with $\nu^{\rm adj.}_i$ and $\eff(H_i, E_i) $ with $\nu^{\rm eff.}_i$.
Defining $\delta_i = \nu^{\rm adj.}_i - \nu^{\rm eff.}_i$ and using the linearity of of expectation values, we have
\begin{align}
      \left\langle \mc C_{\rm RC}(\vec{T}) \right\rangle_{\vec{T}} 
    &= \langle \nu^{\rm adj.}_{m:0}\rangle \notag \\
    &= \nu^{\rm eff.}_m \langle \nu^{\rm adj.}_{m-1:0}\rangle+ \langle \delta_m\nu^{\rm adj.}_{m-1:0}\rangle \notag \\
    &=  \nu^{\rm eff.}_m \langle \nu^{\rm adj.}_{m-1:0}\rangle+ \langle \delta_m  \nu^{\rm eff.}_{m-1} \nu^{\rm adj.}_{m-2:0}\rangle+ \langle \delta_{m} \delta_{m-1} \nu^{\rm adj.}_{m-2:0}\rangle
\end{align}
Now note that the only term in $\delta_m  \nu^{\rm eff.}_{m-1} \nu^{\rm adj.}_{m-2:0}$ that depends on $T_{m-1}$ is $\delta_m$ and so
\begin{align}
    \langle \delta_m  \nu^{\rm eff.}_{m-1} \nu^{\rm adj.}_{m-2:0}\rangle = 0,
\end{align}
that is,
\begin{align}
      \left\langle \mc C_{\rm RC}(\vec{T}) \right\rangle_{\vec{T}} 
    &=  \nu^{\rm eff.}_m \langle \nu^{\rm adj.}_{m-1:0}\rangle+ \langle \delta_{m} \delta_{m-1} \nu^{\rm adj.}_{m-2:0}\rangle \label{eq:avg_cir_a}
\end{align}
The same argument establishes that for any $i \in (0, m]$ we have
\begin{align}
    \langle \nu^{\rm adj.}_{i:0}\rangle &= \nu^{\rm eff.}_i \langle \nu^{\rm adj.}_{i-1:0}\rangle +
    \langle \delta_i \nu^{\rm adj.}_{i-1:0}\rangle 
   \notag\\
    &= 
    \nu^{\rm eff.}_{i}\langle \nu^{\rm adj.}_{i-1:0}\rangle +
    \langle \delta_i \nu^{\rm eff.}_{i-1} \nu^{\rm adj.}_{i-2:0}\rangle +
    \langle \delta_i \delta_{i-1} \nu^{\rm adj.}_{i-2:0}\rangle
 \notag\\
    &=   \nu^{\rm eff.}_{i}\langle \nu^{\rm adj.}_{i-1:0}\rangle +
    \langle \delta_j \delta_{i-1} \nu^{\rm adj.}_{i-2:0}\rangle \label{eq:mu_sub}
\end{align}
where we have used the fact that the only factor in the first order term that depends on $T_{i-1}$ is $\delta_i$ so that the first order term disappears on average.

By repetitively substituting \cref{eq:mu_sub} in \cref{eq:avg_cir_a}, the process applied averaged over all randomizations can be expressed as
\begin{align} 
   \left\langle \mc C_{\rm RC}(\vec{T}) \right\rangle_{\vec{T}} &=  \langle \nu^{\rm adj.}_{m:0}\rangle \notag \\
    &= \nu^{\rm eff.}_{m:0}+ \sum_{i = 1}^m \nu^{\rm adj.}_{m:i+1} \langle \delta_i \delta_{i-1} \nu^{\rm adj.}_{i-2:0}\rangle.
\end{align}
Consequently, the effective dressed cycle describes the effective error process in a randomly compiled circuit with some quadratic (i.e., second order) corrections.
These quadratic corrections are based on the dressed cycle.
We now show how the corrections can be rephrased in terms of perturbations from ``gate-independent'' noise on the easy cycles, that is, for
\begin{align}
\nrep{E_i} = \Lambda  \phi(E_i) + \epsilon(E_i)
\end{align}
where $\Lambda$ is a fixed error channel, and where $\epsilon$ is small for all easy cycles $E_i$.
When $\epsilon(E_i) = 0$, $\delta_i$ depends only on $T_{i}$ and all other terms in $\langle \delta_i \delta_{i-1}\nu^{\rm adj.}_{i-2:0} \rangle$ are independent of $T_i$ and so $\langle \delta_i \delta_{i-1}\nu^{\rm adj.}_{i-2:0} \rangle = 0$ for all $i$, that is, all the perturbation terms exactly vanish.
This matches the result from \cite{Wallman2016} for gate-independent noise.

\section{Effective error channels under RC: proof of \cref{lem:pauli_channel}}\label{app:twirled_error}

In the previous section, we showed that $\langle \mc C_{\rm RC}(\vec{T}) \rangle_{\vec{T}} \approx  \nu^{\rm eff.}_{m:0}$ where $\nu^{\rm eff.}_i:=\langle \adj(H_i,E_i| T_i, T_{i-1}) \rangle_{T_i, T_{i-1}}$ (see \cref{def:edc}). We now seek to show \cref{lem:pauli_channel}, 
which states that when $\set T= \set P_n$, the effective dressed cycle takes the form
\begin{align}
    \langle \adj(H_i,E_i, \vec{T}) \rangle_{\vec{T}}
    &= \phi(H_iE_i) \mc S_i~,
\end{align}
where $\mc S_i$ is stochastic.
Let's express the $i$th effective dressed cycle as
\begin{align}
    \eff(H_i,E_i)
    &=  \phi(H_iE_i) \Big\langle \phi(E_i^{-1})\phi(Q^{-1}) \phi(H_i^{-1})\nrep{H_i} \nrep{Q E_i 
    R}\phi(R^{-1}) \Big\rangle_{Q,R \in \set P_n}~.
\end{align}
We are hence interested in the error channel described by 
\begin{align}
    \mc E_i = \Big\langle \phi(E_i^{-1})\phi(Q^{-1}) \phi(H_i^{-1})\nrep{H_i} \nrep{Q E_i 
    R}\phi(R^{-1}) \Big\rangle_{Q,R \in \set P_n}~.
\end{align}
From \cref{lem:eff}, this expression captures precisely what error model 
randomized compiling induces (and hence what Cycle Benchmarking and 
other CB-based protocols measure), even in the presence of twirling 
operations whose implementations are cycle dependent.
We can explore this formula in three stages of successively more 
general approximations.

\subsection{If all easy cycles have the same error}

This case subsumes the case of perfectly implemented easy cycles.
Here, we assume $\nrep{E}=\Lambda \phi(E)$ for any easy cycle, including those which are not in the twirling group, which allows us to effectively pretend that the CPTP error map $\Lambda$ is part of the noise of the hard cycle, causing all easy gates to be perfect. This in turn collapses the expectation over $R$. Explicitly, we have
\begin{align}
    \mc E_i &= \Big\langle \phi(E_i^{-1})\phi(Q^{-1}) \phi(H_i^{-1})\nrep{H_i} \Lambda \rep{Q E_i 
    R}\phi(R^{-1}) \Big\rangle_{Q,R \in \set P_n} \tag{$\nrep{Q E_i 
    R}=\Lambda \rep{Q E_i 
    R}$} \\
    &= \phi(E_i^{-1}) \Big[ \phi(H_i^{-1})\nrep{H_i} \Lambda\Big]^{\set P_n} \rep{ E_i 
    } \label{eq:GD_Pauli}
\end{align}
which is purely stochastic.

\subsection{If the easy cycle is Clifford}

If the twirling group is $\set {P}_n$ and $H_i$ and $E_i$ are both Clifford (which includes the common benchmarking case where $E_i=I$), then 
we can define $Q'=E_i^{-1}Q E_i$ and $R'=Q'R$ which will 
both be Paulis:
\begin{align} \label{eq:easy_cliff}
    \mc E_i &= \Big\langle \phi({Q'}^{-1}) \phi(E^{-1}_i) \phi(H_i^{-1})\nrep{H_i} \nrep{E_i 
    R'}\phi({R'}^{-1}) \phi({Q'}) \Big\rangle_{Q',R' \in \set P_n} \notag \\
    &=  \Big[ \phi(E^{-1}_i) \phi(H_i^{-1})\nrep{H_i} \Big\langle \nrep{E_i 
    R'}\phi({R'}^{-1}) \Big\rangle_{R'\in \set P_n} \Big]^{\set P_n}~,
\end{align}
which is purely stochastic.

\subsection{If the hard cycle is a round of parallel small phase rotations}
Let the hard cycle be $H_i=Z_{\theta_1} \otimes \cdots \otimes Z_{\theta_n}$ and 
let the twirling group $\set T_i$ be generated by local Paulis and $H_i^2$ (notice that $E_i=I$). Each element of $\set T_i$ can be decomposed as $RP$ where $P \in \set P_n$ and $R \in \langle H_i^2\rangle$ (here $\langle G\rangle$ is the group generated by $G$). We can define $Q'=PQ$ which will be a Pauli. 
Then,
\begin{align}
    \mc E_i &= \Big\langle \phi(P^{-1})\phi(R^{-1}) \phi(H_i^{-1})\nrep{H_i} \nrep{RPQ}\phi(Q^{-1}) \Big\rangle_{\substack{R \in \langle H_i^2\rangle \\
    P,Q \in \set P_n}}~\notag \\
    & =\Big\langle \phi(P^{-1})\phi(R^{-1}) \phi(H_i^{-1})\nrep{H_i} \nrep{RQ'}\phi({Q'}^{-1}) \phi(P^{-1}) \Big\rangle_{\substack{R \in \langle H_i^2\rangle \\
    P,Q' \in \set P_n}}\notag \\
    & = \Big[ \Big\langle \phi(R^{-1}) \phi(H_i^{-1})\nrep{H_i} \nrep{RQ'}\phi({Q'}^{-1}) \Big\rangle_{\substack{R \in \langle H_i^2\rangle \\
    Q' \in \set P_n}} \Big]^{\set P_n} ~,
\end{align}
which is stochastic.

\section{The Pauli fidelity estimation (PIE) oracle}\label{sec:pie_oracle}
In this section we provide a brief overview of the PIE oracle used to estimate the Pauli fidelities for Clifford dressed cycles. We begin by presenting the original algorithm presented in Ref. \cite{FW2019}, then we show how this algorithm may be enhanced if either state-preparation or measurement errors are noiseless.

\subsection{The PIE oracle}
There can be a certain freedom in choosing the state 
preparation and measurement strategy for the PIE protocol. 
Here, we focus on local state preparation and measurement strategies. 
Furthermore, we will only consider Clifford hard cycles for the sake of simplicity.

Before we introduce the PIE oracle, we introduce some additional notation.
For $P \in \{X,Y,Z\}$ and $\bm s =(s_0, s_1,\cdots, s_{n-1}) \in \set Z_2^{n}$:
\begin{align}
    P^{\bm s} := \prod_{i \in \set{Z}_n} P^{s_i}_{i}~.
\end{align}
Furthermore, for some qubit support $\set A$, let $\bm s(\set A) \in \set Z_2^{n}$
be a string with $i$th entry $\bm s(\set A)_i = 1$ \emph{iff} $i \in \set A$.

\begin{protocol}[Pauli infidelity estimation (PIE)] \label{proto:pie}
    \begin{itemize}
        \item[]
        \item[] {\bf Input}: A $n$-qubit Clifford hard cycle $H$; a subset of Paulis $\set S\subseteq \bb{P}_n$; two positive integers $m_1,\: m_2>m_1$ such that $H^{m_2-m_1}=I$.
        
        \item[] {\bf Output}: A set ${\rm PIE}(\set S, H)=\{\hatfidorbit{P}{H} : P \in \set S\}$ of estimates of the orbital averages $\fidorbit{P}{H}$.
        \item[] {\bf Steps}: 
            \begin{itemize}
            \item[]For $P\in\set S$:
            \begin{itemize}
                \item[1.] For $m\in\{m_1,m_2\}$, construct as described
                a circuit with $m+1$ easy cycles and $m$ hard cycles:
                    \begin{itemize}
                        \item[1.1] Choose an initial Clifford easy cycle $E_0$ such that $E_0[P]= Z^{\bm s (\set A)}$ where $\set A$
                        is the support of $P$
                         (e.g. if $P=ZXX_{\{0,2,5\}}$, we could pick $E_0={\rm Had} \otimes {\rm Had}_{\{2,5\}}$). 
                         \item[1.2] Let $Q:=\phi(H^m)[P]$. Choose the last easy cycle such that $E_{m}[Q]= Z^{\bm s (\set B)}$ where $\set B$ is the support of Q.
                        \item[1.3] Initialize a base circuit $\mc{C}= E_{m} H E_{m-1} H \cdots E_1 H E_0$ where
                        $E_i=I$ for $i\notin\{0,m\}$, and randomly compile Paulis in the easy cycles:
                        \begin{align}
                            E_i \to T_{i}^cE_iT_{i-1}=:E_i'
                        \end{align}
                        where $T_{-1}=I$. Let $T_m^c:=X^{\bm x}Z^{\bm z}$ for $\bm x, \bm z \in \set Z_{2}^n$,and let 
                        $\mc{C}'= E_{m}' H E_{m-1}' H \cdots E_1' H E_0'$.
                        % \arnaud{We could also randomize over $T_{-1}$ and get an extra projection.}
                        \item[1.4] Initialize a state $\rho_0 \approx \ketbra{0^n}$, perform the randomized circuit 
                        $\mc{C}'$ and measure in the computational basis. Let the outcome be the string $\bm s \in \set Z_{2}^n$.
                        \item[1.5]
                        If  $X^{\bm x+\bm s}$ commutes with  $Z^{\bm s (\set B)}$, count +1;\\
                        If  $X^{\bm x+\bm s}$ anti-commutes with  $Z^{\bm s (\set B)}$, count $-1$.\\
                        % \arnaud{If initial randomization with $T_{-1}= X^{\bm x_{\rm sp}} Z^{\bm z_{\bm sp}}$, also multiply the count sign by $\chi_{Z^{\bm s (\set A)}}(X^{\bm x_{\rm sp}})$.}

                        % Notice that the hard cycle appears $m$ times, and that $H^{m}=I$. The average
                        % implementation will take the form
                        % \begin{align}
                        %      \mc{C}_{\rm{RC}} \approx  \phi(E_{m})\mc S_{m} [\phi(H) \mc S]^{m}  \mc S^{-1} \phi(E_0) \mc S_0~.
                        % \end{align}
                        % If we use \cref{eq:GD_Pauli}, we get
                        %  \begin{align}
                        %      \mc{C}_{\rm{RC}} \approx  \phi(E_{m})\mc S_{m} [\phi(H) \mc S]^{m}   \phi(E_0)~.
                        % \end{align}
                        % \emph{Remark:} the last and easy cycles are Pauli-randomized.
                        % \arnaud{With GD easy cycles, we have $S_0= \phi(E_0^{-1}) S \phi(E_0)$... we actually need sequences of length $m+1$ otherwiae it doesn't work.}
             \item[1.6] Get an average count $N_{P,m}$ from many shots and many circuit randomizations.
                    \end{itemize}
                \item[2.] Calculate $\hatfidorbit{P}{H}$ as \begin{equation} \label{eq:f_hat}\hatfidorbit{P}{H}=\left(\frac{N_{P,m_2}}{N_{P,m_1}}\right)^{1/(m_2-m_1)}\:. \end{equation}
            \end{itemize}
    \end{itemize}
    \end{itemize}
\end{protocol}

\subsection{Analysis of PIE}
Let's analyze \cref{proto:pie} in more detail. First, it follows
from \cref{eq:avg_rc} that
for a fixed $P \in \set S$ and fixed $T_m^c=X^{\bm x}Z^{\bm z}$ for $\bm x, \bm z \in \set Z_{2}^n$, 
the average circuit constructed in the first step of \cref{proto:pie} takes the form 
\begin{align}
    \left\langle \mc C_{\rm RC}(\vec{T}) \right\rangle_{\vec{T}} &\approx  \phi(X^{\bm x}Z^{\bm z}E_m)\mc S_{m} \phi(H) \mc S_H \cdots \phi(H)  \phi(E_0) \mc S_0 \\
                     & =   \phi(X^{\bm x}Z^{\bm z}) \phi(E_m)\mc S_{m} \left[\phi(H) \mc S_H\right]^m  \mc S_H^{-1}  \phi(E_0) \mc S_0~, \label{avg_pie_circuit}
\end{align}
where, from \cref{eq:easy_cliff}, 
\begin{subequations}
    \begin{align}
            \mc S_0 &:=  \Big[ \phi(E^{-1}_0) \phi(H^{-1})\nrep{H} \Big\langle \nrep{E_0 
    R}\phi({R}^{-1}) \Big\rangle_{R\in \set P_n} \Big]^{\set P_n}~, \\
        \mc S_H &:= \Big[ \phi(H^{-1})\nrep{H} \Big\langle \nrep{R}\phi({R}^{-1}) \Big\rangle_{R\in \set P_n} \Big]^{\set P_n}~,\\
        \mc S_{m} &:=  \Big[ \phi(E_m^{-1}) \Big\langle \nrep{E_m 
    R}\phi({R}^{-1}) \Big\rangle_{R\in \set P_n} \Big]^{\set P_n}~. 
    \end{align}
\end{subequations}
Starting from \cref{avg_pie_circuit}, the probability of observing the outcome $\bm s \in \set Z_2^n$ is given by
\begin{align}\label{eq:probability_s}
   \Pr(\bm s| \bm x,\bm z) &=  \dbra{\bm s} \Lambda_{\rm meas}\left\langle \mc C_{\rm RC}(\vec{T}) \right\rangle_{\vec{T}}  \dket{\rho_0}  \notag \\
   &=  \dbra{\bm s} \Lambda_{\rm meas} \phi(X^{\bm x}Z^{\bm z}) \phi(E_m)\mc S_{m} \left[\phi(H) \mc S_H\right]^m  \mc S_H^{-1}  \phi(E_0) \mc S_0  \dket{\rho_0} 
\end{align}
where $\Lambda_{\rm meas}^\dagger$ is a completely-positive unital error map acting on measurement
POVMs, $\dket{\rho_0}$ is the vectorized density matrix {of the input state}, and $\dbra{\bm s}$ is the vectorized (transposed) 
measurement projector $\ketbra{\bm s}$. In the step 1.5 of \cref{proto:pie}, we attribute to the outcome event 
$\bm s$ the value $\chi_{Z^{\bm s (\set B)}}(X^{\bm s+ \bm x})$ (the character function $\chi_{P}(Q)$ is defined in \cref{def:chi}); the expectation value of $\phi_{P,m}$ is
therefore:
\begin{align}
    \mbb E [N_{P,m}] &= \frac{1}{|\set P_n|}\sum_{\bm x, \bm z} \sum_{\bm s} \chi_{Z^{\bm s (\set B)}}(X^{\bm s+ \bm x}) \Pr(\bm s| \bm x,\bm z) \notag \\
    & = \frac{1}{|\set P_n|}\sum_{\bm x, \bm z,\bm s}\chi_{Z^{\bm s (\set B)}}(X^{\bm s+ \bm x})  \dbra{\bm s} \Lambda_{\rm meas} \phi(X^{\bm x}Z^{\bm z}) \phi(E_m)\mc S_{m} \left[\phi(H) \mc S_H\right]^m  \mc S_H^{-1}  \phi(E_0) \mc S_0 \dket{\rho_0 }  \notag \\
    &=\sum_{\bm x, \bm z,\bm s'}  \chi_{Z^{\bm s (\set B)}}(X^{\bm s'}) \dbra{\bm s'}  \frac{ \phi^\dagger(X^{\bm x}Z^{\bm z})\Lambda_{\rm meas}  \phi(X^{\bm x}Z^{\bm z})}{|\set P_n|} \phi(E_m)\mc S_{m} \left[\phi(H) \mc S_H\right]^m  \mc S_H^{-1}  \phi(E_0) \mc S_0 \dket{\rho_0}  \notag \\
    &=\sum_{\bm s'}   \chi_{Z^{\bm s (\set B)}}(X^{\bm s'}) \dbra{\bm s'}  \Lambda_{\rm meas}^{\set P_n}  \phi(E_m)\mc S_{m} \left[\phi(H) \mc S_H\right]^m  \mc S_H^{-1}  \phi(E_0) \mc S_0 \dket{\rho_0} \notag \\
    &= \dbra{Z^{\bm s (\set B)}}  \Lambda_{\rm meas}^{\set P_n}  \phi(E_m)\mc S_{m} \left[\phi(H) \mc S_H\right]^m  \mc S_H^{-1}  \phi(E_0) \mc S_0 \dket{\rho_0}~, \label{eq:exp_pie}
\end{align}
where on the last line we used elementary character theory to get
\begin{align}
    \sum_{\bm s'}   \chi_{Z^{\bm s (\set B)}}(X^{\bm s'}) \dbra{\bm s'}  =  \dbra{Z^{\bm s (\set B)}}~.
\end{align}
All the channels appearing in \cref{eq:exp_pie} have a simple action on Paulis; to translate 
\cref{eq:exp_pie} in terms of scalars, we define
channel-specific Pauli fidelities
\begin{align}
     f(Q, \mc S_i):= d^{-1}\dbra{Q} \mc S_{i} \dket{Q}~,
\end{align}
where $\mc S_i$ is a Pauli stochastic channel. Moreover, let's define $b:= m \mod |\orbit{P}{H}|$.
With these definitions at hand, we have 
\begin{align}
    \mbb E [N_{P,m}] & =f({Z^{\bm s (\set B)}},\Lambda_{\rm meas}^{\set P_n}) \dbra{Z^{\bm s (\set B)}} \phi(E_m)\mc S_{m} \left[\phi(H) \mc S_H\right]^m  \mc S_H^{-1}  \phi(E_0) \mc S_0 \dket{\rho_0} \notag \\
    & =f({Z^{\bm s (\set B)}},\Lambda_{\rm meas}^{\set P_n}) \dbra{Q}\mc S_{m} \left[\phi(H) \mc S_H\right]^m  \mc S_H^{-1}  \phi(E_0) \mc S_0 \dket{\rho_0}  \tag{$E_0[P]= Z^{\bm s (\set A)}$}  \\
    & = f({Z^{\bm s (\set B)}},\Lambda_{\rm meas}^{\set P_n})  f(Q,\mc S_{m})  f\left(Q,\left[\phi(H) \mc S_H\right]^{m-b}\right)   \dbra{Q} \left[\phi(H) \mc S_H\right]^{b} \dket{P}  f(P,S_H^{-1})  \dbra{Z^{\bm s (\set A)}}  \mc S_0 \dket{\rho_0} \notag \\
    & = f({Z^{\bm s (\set B)}},\Lambda_{\rm meas}^{\set P_n})  f(Q,\mc S_{m})  f\left(P,\left[\phi(H) \mc S_H\right]^{m-b}\right) f(P, \phi(H^{-b})\left[\phi(H) \mc S_H\right]^{b})  f(P,\mc S_H^{-1})     \dbra{Z^{\bm s (\set A)}}  \mc S_0 \dket{\rho_0}~. \label{eq:pie_signal}
\end{align}
It's straightforward to see that $f\left(P,\left[\phi(H) \mc S_H\right]^{m-b}\right)$ is of the form
\begin{align}
    f\left(P,\left[\phi(H) \mc S_H\right]^{m-b}\right) &= \left(\prod_{Q \in \orbit{ P}{H}} f(Q, \mc S_H)\right)^{(m-b)/|\orbit{ P}{H}|} \notag \\
    & = f^{m-b}(\orbit{P}{H}, \mc S_H)+ {\rm h.o.}~, \label{eq:decay_value}
\end{align}
where on the last line we used the fact that the geometric and arithmetic means are essentially equal since $\{f(Q, \mc S_H)\}_{Q \in \orbit{P}{H}}$ are guaranteed to be close one another. The reason to consider the arithmetic mean instead of the geometric mean is solely for pedagogical purposes.

By substituting \cref{eq:decay_value} back in \cref{eq:pie_signal}, we get
an exponential signal $\mbb E [N_{P,m}] =A f^m$ where
\begin{subequations}
    \begin{align}\label{eq:pie_param_A}
    A &:=  f({Z^{\bm s (\set B)}},\Lambda_{\rm meas}^{\set P_n})  f(Q,\mc S_{m}) \frac{f(P, \phi(H^{-b})\left[\phi(H) \mc S_H\right]^{b})}{f^b(\orbit{P}{H}, \mc S_H)}  f(P,S_H^{-1})     \dbra{Z^{\bm s (\set A)}}  \mc S_0 \dket{\rho_0}~, 
\\
    f & := f(\orbit{P}{H}, \mc S_H)~.\label{eq:pie_param_p}
    \end{align}
\end{subequations}
The choice of $b$ indirectly affects $Q$, since $\phi(H^b)[P]=Q$. This in turns affects
the support $\set B$. Since $f$ is estimated by looking at the curve $Af^m$, it can be estimated to a precision that greatly exceeds $1/\sqrt{N_{\rm runs}}$ \cite{Harper_2019}.

\subsection{Resolving the orbits}
As seen in \cref{eq:pie_param_A,eq:pie_param_p}, by fitting the signal of ${\rm{PIE}}(P,H)$ to an exponential, 
we can get a very precise estimate of the orbital average $f({\orbit{P}{H}})$.
Here, we provide a modification of the PIE algorithm which can resolve the individual fidelities $f(P)$ appearing in the orbital average $f({\orbit{P}{H}})$ under certain conditions. 

First, we assume that the state preparation error is negligible. 
This enforces the coefficient $\dbra{Z^{\bm s (\set A)}}  \mc S_0 \dket{\rho_0}$ appearing in 
\cref{eq:pie_param_A} to take the form:
\begin{align}\label{eq:cond_1}
\end{align}
We note here that \cref{eq:cond_1} is also obtainable via a gauge transformation 
which sets the state preparation error to the identity map. More generally, 
orbits can be resolved in any gauge where the state preparation error map is 
known \cite{Lin_2021,\learnability}.

Second, we assume that the easy cycles have a gate-independent error model, $\nrep{E}= \Lambda \phi(E)$. 
This assumption can also be thought of as an approximation, and the resolution of the individual
fidelities in the orbit $\orbit{\set P}{H}$ may contain systematic inaccuracies 
that scale as the standard deviation of the easy cycles' infidelity. 
This standard deviation is expected to be small since easy cycles have higher fidelities.
As seen through \cref{eq:GD_Pauli}, 
the assumption $\nrep{E}= \Lambda \phi(E)$ simplifies $\mc S_0$ and $\mc S_m$:
\begin{subequations}
    \begin{align}
    \mc S_0 &= \mc S_H, \\
    \mc S_m & = \phi(E_{m}^{-1}) \Lambda^{\set P_n} \phi(E_{m}),
    \end{align}
\end{subequations}
and the constant $A$ in the decay function (\cref{eq:pie_param_A}) takes the form
\begin{align}\label{eq:A_simple}
    A= f({Z^{\bm s (\set B)}},\Lambda_{\rm meas}^{\set P_n})  f({Z^{\bm s (\set B)}},\Lambda^{\set P_n}) \frac{f(P, \phi(H^{-b})\left[\phi(H) \mc S_H\right]^{b})}{f^b(\orbit{P}{H}, \mc S_H)}~,
\end{align}
where $\set B$ is the support of $Q=\phi(H^b)[P]$.

We emphasize that the individual fidelities are harder to resolve
since they are not individually amplified. For this reason, 
the estimates $\hat{f}(P)$ are limited by shot noise. 
Now, consider the following protocol to estimate $f(P)$:

\begin{protocol}[PIE with orbital resolution] \label{proto:pie_orbital_res}
    \begin{itemize}
        \item[]
        \item[] {\bf Input}: A $n$-qubit Clifford hard cycle $H$; a Pauli $P \subseteq \bb{P}_n$; three positive integers $m_1,\: m_2>m_1,\: m_3>0$ such that $m_1, m_2 \mod |\orbit{P}{H}| =1$, $m_3 \mod |\orbit{P}{H}| =0$.
        
        \item[] {\bf Output}: An estimate of $f(P)$ for the effective dressed cycle $\eff(H, I) $.
        \item[] {\bf Steps}: 
            \begin{itemize}
                \item[1.] Perform ${{\rm PIE}(P,H)}$ with sequence lengths $m_1$ and $m_2$, and from the decay signal
                $Ap^m$, get the estimates $\hat{A}$ and $\hat{p}$.
               
                \item[2.] Construct as described
                a circuit with $m_3+1$ easy cycles and $m_3$ hard cycles:
                    \begin{itemize}
                        \item[2.1] Choose an initial Clifford easy cycle $E_0$ such that $E_0[Q]= Z^{\bm s (\set B)}$ where $\set B$
                        is the support of $Q=\phi(H)[P]$. 
                         \item[2.2] Choose the last easy cycle such that $E_{m_3}[Q]=\set Z^{\bm s (\set B)}$ where $\set B$ is the support of Q.
                        \item[2.3] Initialize a base circuit $\mc{C}= E_{m_3} H E_{m_3-1} H \cdots E_1 H E_0$ where
                        $E_i=I$ for $i\notin\{0,m\}$, and randomly compile Paulis in the easy cycles:
                        \begin{align}
                            E_i \to T_iE_iT_{i-1}^c=:E_i'
                        \end{align}
                        where $T_{-1}^c=I$. Let $T_{m_3}=X^{\bm x}Z^{\bm z}$ for $\bm x, \bm z \in \set Z_{2}^n$,and let 
                        $\mc{C}'= E_{m_3}' H E_{m_3-1}' H \cdots E_1' H E_0'$.
                        \item[2.4] Initialize a state $\rho_0 = \ketbra{0^n}$, perform the randomized circuit 
                        $\mc{C}'$ and measure in the computational basis. Let the outcome be the string $\bm s \in \set Z_{2}^n$.
                        \item[2.5]
                        If  $X^{\bm x+\bm s}$ commutes with  $Z^{\bm s (\set B)}$, count +1;\\
                        If  $X^{\bm x+\bm s}$ anti-commutes with  $Z^{\bm s (\set B)}$, count $-1$.
                        \item[2.6] Get an average count $N_{Q,m_3}$ from many shots and many circuit randomizations.
                    \end{itemize}
                \item[3.] Calculate $\hat{f}(P, \mc S_H)$ as
                \begin{equation} 
                    \hat{f}(P, \mc S_H) = \frac{\hat{A}}{N_{Q,m_3} \hat{p}^{m_3-1}}~.
                \end{equation}
            \end{itemize}
    \end{itemize}
\end{protocol}
Notice indeed that from \cref{eq:A_simple}, 
\begin{align}
    \mbb E[\hat A] = f({Z^{\bm s (\set B)}},\Lambda_{\rm meas}^{\set P_n})  f({Z^{\bm s (\set B)}},\Lambda^{\set P_n}) \frac{f(P,  \mc S_H)}{f(\orbit{P}{H}, \mc S_H)}~,
\end{align}
where $\set B$ is the support of $Q=\phi(H)[P]$, and that 
\begin{align}
    \mbb E[N_{Q,m_3}] = f({Z^{\bm s (\set B)}},\Lambda_{\rm meas}^{\set P_n})  f({Z^{\bm s (\set B)}},\Lambda^{\set P_n}) f^{m_3}(\orbit{P}{H}, \mc S_H)~.
\end{align}
They are independent estimates, and their ratio is hence 
\begin{align}
    \mbb E \left[\frac{\hat{A}}{N_{Q,m_3}} \right]=   f(P,\mc S_H) f^{m_3-1}(\orbit{P}{H}, \mc S_H)~.
\end{align}

\end{document}